\def\final{1}
\definecolor{linkcol}{rgb}{0,0,0.35}
\definecolor{todocol}{rgb}{0.6,0,0}
\definecolor{citecol}{rgb}{0.1,0.35,0}
\newcommand{\ipcoskip}[1]{#1}
\newcommand{\ifipco}[2]{#2}
\newcommand{\nnote}[1]{\ignorespaces}
\newcommand{\dnote}[1]{\ignorespaces}
\newcommand{\todo}[1]{\ignorespaces}
\newcommand{\nnote}[1]{{\em\color{blue!50!black}Neil: #1}}
\newcommand{\dnote}[1]{{\em\color{purple}Dario: #1}}
\newcommand{\todo}[1]{{\em\color{red!50!black}TODO: #1}}
\definecolor{mydarkgreen}{RGB}{0,100,0}
\theoremstyle{theorem}
\newtheorem{theorem}{Theorem}
\newtheorem{lemma}[theorem]{Lemma}
\newtheorem{claim}[theorem]{Claim}
\newtheorem{corollary}[theorem]{Corollary}
\theoremstyle{definition}
\newtheorem{problem}{Problem}
\newenvironment{proofof}[1]{\begin{proof}[Proof of #1]}{\end{proof}}
\newcommand{\ind}{\mathbf{1}}
\newcommand{\lpint}{\int}
\newcommand{\lpsum}{\sum}
\newcommand{\rev}[1]{\overleftsmallarrow{#1}}
\newcommand{\bid}[1]{\overleftrightsmallarrow{#1}}
\newcommand{\R}{\mathbb{R}}
\newcommand{\Rplus}{\R_+}
\DeclareMathOperator{\cost}{cost}
\newcommand{\intR}{\int_{\R}}
\newcommand{\zg}{z}
\newcommand{\overleftrightsmallarrow}{\mathpalette{\overarrowsmall@\leftrightarrowfill@}}
\newcommand{\overrightsmallarrow}{\mathpalette{\overarrowsmall@\rightarrowfill@}}
\newcommand{\overleftsmallarrow}{\mathpalette{\overarrowsmall@\leftarrowfill@}}
\newcommand{\overarrowsmall@}[3]{%
  \vbox{%
    \ialign{%
      ##\crcr
      #1{\smaller@style{#2}}\crcr
      \noalign{\nointerlineskip}%
      $\m@th\hfil#2#3\hfil$\crcr
    }%
  }%
}
\def\smaller@style#1{%
  \ifx#1\displaystyle\scriptstyle\else
    \ifx#1\textstyle\scriptstyle\else
      \scriptscriptstyle
    \fi
  \fi
}
\newcommand{\keywords}[1]{}
\title{Algorithms for flows over time with scheduling costs\thanks{%
Partially supported by NWO TOP grant 614.001.510 and NWO Vidi grant 016.Vidi.189.087.}}
\author{Dario Frascaria\footnote{Department of Econometrics \& Operations Research, Vrije Universiteit Amsterdam, Amsterdam, The Netherlands. \texttt{d.frascaria@vu.nl}.}~  and 
    Neil Olver\footnote{Department of Mathematics, London School of Economics and Political Science, London, UK, 
and CWI, Amsterdam, The Netherlands. \texttt{N.Olver@lse.ac.uk}.}}
\date{\today}
\begin{document}
\maketitle

\begin{abstract}
Flows over time have received substantial attention from both an optimization and (more recently) a game-theoretic perspective.
In this model, each arc has an associated delay for traversing the arc, and a bound on the rate of flow entering the arc; flows are time-varying.
We consider a setting which is very standard within the transportation economic literature, but has received little attention from an algorithmic perspective.
The flow consists of users who are able to choose their route but also their departure time, and who desire to arrive at their destination at a particular time, incurring a \emph{scheduling cost} if they arrive earlier or later. 
The total cost of a user is then a combination of the time they spend commuting, and the scheduling cost they incur.
We present a combinatorial algorithm for the natural optimization problem, that of minimizing the average total cost of all users (i.e., maximizing the social welfare).
Based on this, we also show how to set tolls so that this optimal flow is induced as an equilibrium of the underlying game.

\keywords{flows over time \and tolls \and traffic}

\end{abstract}

\section{Introduction}\label{sec:intro}
The study of \emph{flows over time} is a classical one in combinatorial optimization; 
it began already with the work of Ford and Fulkerson~\cite{FordFulkerson58} in the 50s.
It is a natural extension of static flows, which associates a single numerical value, representing a total quantity or rate of flow on the arc.
In a flow over time, a second value associated with each arc represents the time it takes for flow to traverse it;
the flow is then described by a function on each arc, representing the rate of flow entering the arc as a function of time.

Classical optimization problems involving static flows
have natural analogs in the flow over time setting (see the surveys~\cite{KRS2009,Skutella2009}).
For example (restricting the discussion to single commodity flows), the \emph{maximum flow over time} problem asks to send as much flow as possible, departing from the source starting from time $0$ and arriving to the sink by a given time horizon $T$;
this can be solved in polynomial time~\cite{FordFulkerson58,FordFulkerson62,FleischerTardos98}.
    A \emph{quickest flow} asks, conversely, for the shortest time horizon necessary to send a given amount of flow.
Of particular importance for us is the notion of an \emph{earliest arrival flow}: this has the very strong property that simultaneously for all $T' \leq T$, the amount of flow arriving by time $T'$ is as large as possible~\cite{Gale}. 
    Such a flow can also be characterized as minimizing the average arrival time~\cite{Jarvis}.
Earliest arrival flows can be ``complicated'', in that they can require exponential space (in the input size) to describe~\cite{Zadeh},  
and determining the average arrival time of an earliest arrival flow is NP-hard \cite{Disser}. 
But they can be constructed in time strongly polynomial in the sum of the input and output size~\cite{Baumann2006}.

Another important aspect of many settings were flow-over-time models are applicable---such as traffic---involves 
game theoretic considerations.
In traffic settings, the flow is made up of a large number of individuals making their own routing choices, 
and aiming to maximize their own utility rather than the overall social welfare (e.g., average journey time).
\emph{Dynamic equilibria}, which is the flow over time equivalent of Wardrop equilibria for static flows, are key objects of study.
Existence, uniqueness, structural and algorithmic issues, and much more have been receiving increasing recent interest from the optimization community~\cite{BFA11,CCL11,CCO17,Correa19,Koch2011,SeringKoch19,SeringSkutella18}.

Traffic, being such a relevant and important topic, has received attention from many different communities, each with their own perspective.
Within the transportation economic literature, modelling other aspects of user choice besides route choice has been considered particularly important. 
A very standard setting, motivated by morning rush-hour traffic, is the following~\cite{Vickrey69,ADL90}.
Users are able to choose not only their route, but also their \emph{departure time}.
They are then concerned not only with their journey time, but also their \emph{arrival time} at the destination.
This is captured in a \emph{scheduling cost function} which we will denote by $\rho$: 
a user arriving at time $\theta$ will experience a scheduling cost of $\rho(\theta)$.
The total disutility of a user is then the sum of their scheduling cost and their journey time (scaled by some factor $\alpha > 0$ representing their value for time spent commuting).
A very standard choice of $\rho$ is
\begin{equation}\label{eq:stdrho}
    \rho(\theta) = \begin{cases}
        -\beta \theta &\text{ if } \theta \leq 0\\
        \phantom{-}\gamma \theta &\text{ if } \theta > 0
    \end{cases},
\end{equation}
where $\beta < \alpha < \gamma$ (it is very bad to be late, but time spent in the office early is better than time spent in traffic).
\ifipco{Our approach can handle essentially general scheduling cost functions, but we will restrict our discussion to strongly unimodal cost functions; these are the most relevant, and this avoids some distracting technical details.}%
{We will allow general scheduling cost functions, though for most of the paper we will focus on strongly unimodal cost functions; 
 these are the most relevant, and this avoids some distracting technical details.}

Two very natural questions can be posed at this point.
The first is a purely optimization question, with no attention paid to the decentralized nature of traffic.

\begin{problem}\label{prob:flow}
    How can one compute a flow over time minimizing the average total cost paid by users, i.e., maximizing the social welfare?
\end{problem}
From now on, we will call a solution to this problem simply an \emph{optimal flow}.

It is well understood that users will typically not coordinate their actions to induce a flow that minimizes total disutility.
There is a huge body of literature (particularly in the setting of static flows~\cite{Roughgardenbook}) investigating this phenomenon. 
In the traffic setting, the relevance of an optimal flow represented by an answer to this question comes primarily via the possibility of \emph{pricing}.
By putting appropriate tolls on roads, we can influence the behaviour of users and the resulting dynamic equilibrium.
Thus:
\begin{problem}\label{prob:tolls}
    How can one set tolls (possibly time-varying) on the arcs of a given instance so that an optimal flow is obtained in dynamic equilibrium?
\end{problem}
One subtlety is that since dynamic equilibria need not be precisely unique, there is a distinction between tolls that induce an optimal flow as \emph{an} equilibrium, compared to tolls for which \emph{all} dynamic equilibria are optimal.
(This is called \emph{weak} and \emph{strong} enforcement by Harks~\cite{Harks} in a general pricing setting.)
We will return to this subtlety shortly.

Questions like these are of great interest to transportation economists.
However, most work in that community has focused on obtaining a fine-grained understanding of very restricted topologies (such as a single link, or multiple parallel links); see \cite{Small2015} for a survey.

Both of these question (for general network topologies) were considered by Yang and Meng~\cite{YangMeng98} 
in a discrete time setting,
by exploiting the notion of \emph{time-expanded graphs}.
This is a standard tool in the area of flows over time; discrete versions all of the optimization questions concerning flows over time mentioned earlier can (in a sense) be dealt with in this way.
A node $v$ in the graph is expanded to a collection $(v, i)$ of nodes, for $i \in \mathbb{Z}$ in a suitable interval, 
and an arc $vw$ of delay $\tau_{vw}$ becomes a collection of arcs $((v,i), (w,i+\tau_{vw}))$ (this assumes a scaling so that $\tau_{vw}$ is a length in multiples of the chosen discrete timesteps).
Scheduling costs are encoded by appropriately setting arc costs from $(t,i)$ to a supersink $t'$ for each $i$, 
and the problem can be solved by a minimum cost static flow computation.
A primary disadvantage of this approach (and in the use of time-expanded graphs more generally) is that the running time of the algorithm depends polynomially on the number of time steps, which can be very large.
Further, it cannot be used to exactly solve the continuous time version (our interest in this paper);
by discretizing time, it can be used to approximate it, but the 
size of the time-expanded graph is inversely proportional to the step size of the discretization.
In the same work~\cite{YangMeng98}, the authors also observe that in the discrete setting, an answer to the second question can be obtained from the time-expanded graph as well.
Taking the LP describing the minimum cost flow problem on the time-expanded graph, the optimal dual solution to this LP provides the necessary tolls to enforce (weakly) an optimal flow.
(This is no big surprise---dual variables can frequently be interpreted as prices.)

\paragraph{An assumption on $\rho$.}
Suppose we consider $\rho$ in the standard form given in \eqref{eq:stdrho}, but with $\beta > \alpha$.
    This means that commuting is considered to be less unpleasant than arriving early.
    A user arriving earlier than time $0$ at the sink would be better off ``waiting'' at the sink before leaving, in order to pay a scheduling cost of $0$.
    Whether waiting in this way is allowed or not depends on the precise way one specifies the model, 
    but it is most natural (and convenient) to allow this.
    If we do so, then it is clear that a scheduling cost function $\rho$ can be replaced by
    \[ 
        \hat{\rho}(\theta) := \min_{\xi \geq \theta} \rho(\xi) + \alpha(\xi - \theta)
    \]
    without changing the optimal flow (except there is no longer any incentive to wait at the sink, and we need not even allow it).
    Then $\theta \to \hat{\rho}(\theta) + \alpha \theta$ is nondecreasing.
    From now on, we always assume that $\rho$ satisfies this; we will call it the \emph{growth bound} on $\rho$.

\paragraph{Our results.}
We give a combinatorial algorithm to compute an optimal flow.
Similarly to the case of earliest arrival flows, this flow can be necessarily complicated, and involves a description length that is 
exponential in the input size.

The algorithm is also similar to that for computing an earliest arrival flow.
It is based on the (possibly exponentially sized) path decomposition of a minimum cost flow into \emph{successive shortest paths}.
In particular, suppose we choose the scheduling cost function to be 
\ifipco{as in \eqref{eq:stdrho}, with $\beta=\alpha$ and $\gamma=\infty$.}%
{\begin{equation}\label{eq:rhoeaf}
    \rho(\theta) = \begin{cases} -\alpha \theta &\text{ if } \theta \leq 0\\
        \infty &\text{ if } \theta > 0
    \end{cases}.
\end{equation}}
    Then the disutility a user experiences is precisely described by how much before time $0$ they depart; all users must arrive by time $0$ to ensure finite cost.
    This is precisely the reversal (both in time and direction of all arcs) of an earliest arrival flow, from the sink to the source.
    Our algorithm will be the same as the earliest arrival flow in this case.
    This also shows that it may be the case that all optimal solutions to Problem~\ref{prob:flow} require exponential size (as a function of the input encoding length), since this is the case for earliest arrival flows.

Despite the close relation to earliest arrival flows, the proof of optimality of our algorithm is rather different.
A key reason for this is the following.
As mentioned, earliest arrival flows have the strong property that the amount of flow arriving before a given deadline $T'$ is the maximum possible, \emph{simultaneously for all choices of $T'$} (up to some maximum depending on the total amount of flow being sent).
This implies that an earliest arrival flow certainly minimizes the average arrival time amongst all possible flows~\cite{Jarvis}, 
but is a substantially stronger property.
A natural analog of this stronger property in our setting would be to ask for a flow for which, simultaneously for any given cost horizon $C' \leq C$, the amount of flow consisting of agents experiencing disutility at most $C'$ is as large as possible.
Unfortunately, in general no such flow exists.
The example is too involved to discuss here, but it relates to some questions on the behaviour of dynamic equilibria in this model that are investigated in a parallel manuscript.

Since the proofs for earliest arrival flows~\cite{Gale,Minieka73,Wilkinson71,Baumann2006}
show this stronger property which does not generalize, we take a different approach.
Our proof is based on duality (of an infinite dimensional LP, though we do not require any technical results on such LPs).
The main technical challenge in our work comes from determining the correct ansatz for the dual solution, as well as exploiting properties of the residual networks obtained from the successive shortest paths algorithm in precisely the right way to demonstrate certain complementary slackness conditions.
As was the case with the time-expanded graph approach, the optimal dual solution immediately provides us with the tolls.
However, we obtain an explicit formula for the optimal tolls, in terms of the successive shortest paths of the graph (see \Cref{sec:algorithm}).
This may be useful in obtaining a better structural understanding of optimal tolls, beyond just their computation.
%
{
    We also remark that a corollary of our result is that there is always an optimal solution without waiting (except at the source).
    \nnote{Added this.}
}

Consider for a moment the model where
users cannot choose their departure time, but instead are released from the source at a fixed rate $u_0$, and simply wish to reach the destination as early as possible.
This is the game-theoretic model that has received the most attention from the flow-over-time perspective~\cite{BFA11,CCO17,Correa19,Koch2011,SeringKoch19}.
Our construction of optimal tolls is applicable to this model as well.
Reverse all arcs, as well as the role of the source and sink (thus making $s$ the new sink),
and also introduce a replacement sink $s'$ and arc $ss'$ of capacity $u_0$ in the original instance .
Then by choosing $\rho$ as described in \ifipco{\eqref{eq:stdrho} with $\beta=\alpha$, $\gamma=\infty$}{\eqref{eq:rhoeaf}}, the optimal flow is an earliest arrival flow, and the tolls we construct will induce it in the original instance (after appropriate time reversal).

\medskip
We now return to the subtlety alluded to earlier: the distinction between strongly enforcing an optimal flow, and only weakly enforcing it.
\ifipco{%
\begin{wrapfigure}{r}{0.48\textwidth}
\vspace*{-0.4cm}
\input{exampleLanesIPCO}
\vspace*{-0.4cm}
\end{wrapfigure}
}{%
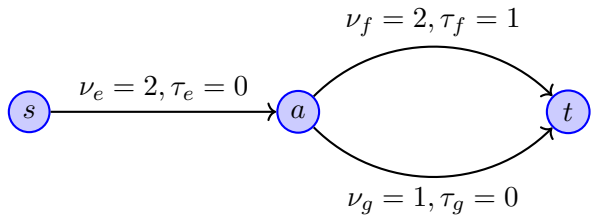
\begin{figure}[H]
\centering
{\begin{tikzpicture}[remember picture,
inner/.style={circle,draw=blue!50,fill=blue!20,thick,inner sep=3pt}, 
  outer/.style={circle,draw=green,fill=green!20,thick,inner sep=5pt}
  ]
      \node [inner,draw=blue, ] (s) {$s$};
      \node [inner,draw=blue, right=3cm of s] (a) {$a$};
      \node [inner,draw=blue, right=3cm of a] (t) {$t$};

      \draw[black,->,thick] (s) edge[above]  node{$\nu_e = 2, \tau_e = 0$} (a);
  \draw[black,->,thick] (a) edge[above, bend left=45]  node{$\nu_f = 2, \tau_f = 1$} (t);
  \draw[black,->,thick] (a) edge[below, bend right=45]  node{$\nu_g = 1, \tau_g = 0$} (t);

\end{tikzpicture}}
\caption{An instance where time-varying arc tolls cannot enforce that \emph{all} equilibria are optimal flows.}
\label{fig:nostrong}
\end{figure}
}
Consider the simple instance \ifipco{shown}{in Figure~\ref{fig:nostrong}}.
Suppose that the outflow of arc $a$ is larger than $1$ for some period in the optimum flow, due to the choice of scheduling cost function.
In this period, one unit of flow would take the bottom arc $g$, and the rest will be routed on $f$. 
Since the total cost (including tolls) of all users is the same in a tolled dynamic equilibrium, a toll of cost equivalent to a unit delay on arc $g$ is needed in this period to induce the optimal flow.
But then it will also be an equilibrium to send \emph{all} flow in this period along $f$.

To strongly enforce an optimal flow, we need more flexible tolls.
One way that we can do it is by ``tolling lanes''.
If we are allowed to dynamically divide up the capacity of an arc into ``lanes'' (say a ``fast lane'' and a ``slow lane''), and then separately set time-varying tolls on each lane, then we \emph{can} strongly enforce any optimal flow.
We discuss this further in \Cref{sec:tolls}.
We are not aware of settings where this phenomenon has been previously observed, and it would be interesting to explore this further in a more applied context.

\paragraph{Outline of the paper.}
We introduce some basic notation and notions, as well as a formally define our model, in \Cref{sec:prelim}.
In \Cref{sec:algorithm}, we describe our algorithm, and show that it returns a feasible flow over time; 
we restrict ourselves to the most relevant case of a strictly unimodal scheduling cost function.
In \Cref{sec:optimality} we show optimality of this algorithm, and in \Cref{sec:tolls} we derive optimal tolls from this analysis.
\ipcoskip{Finally, in \Cref{sec:general_scheduling} we remark on the technical changes needed to the algorithm to handle general scheduling cost functions.}

\section{Model and preliminaries}\label{sec:prelim}

The notation $(z)^+$ is used to denote the nonnegative part of $z$, i.e., $(z)^+ = \max\{z,0\}$.
Given $v: X \to \R$ and $A \subseteq X$, we will use the shorthand notation $v(A) := \sum_{a \in A} v(a)$.
We will not distinguish between a map $v: X \to \R$ and a vector in $\R^X$, and so the notation $v_a$ and $v(a)$ is interchangeable.
\ifipco{All graphs will be directed and (purely for notational convenience) simple and without digons.}%
{All graphs considered will be directed.
We assume all graphs to be simple, and that there are no digons (i.e., there are no pairs $v,w \in V$ so that $vw$ and $wv$ are both arcs).
This is for notational convenience only---this restriction can easily be lifted.}

\ipcoskip{We begin with some basic notions and results about static flows and flows over time.
For further details regarding static flows, we refer the reader to the book by Ahuja, Magnanti and Orlin~\cite{AMO93}.
For more about flows over time, we suggest the surveys by Skutella~\cite{Skutella2009} and K\"ohler et al.~\cite{KRS2009}.
}

\paragraph{Static flows.}
Let $G=(V,E)$ be a directed graph, with source node $s \in V$ and sink node $t \in V$.
Each arc $e \in E$ has a \emph{capacity} $\nu_e$ and a \emph{delay} $\tau_e$ (both nonnegative).
We use $\delta^+(v)$ to denote the set of arcs in $E$ with tail $v$, and $\delta^-(v)$ the set of arcs with head $v$.

Consider some $f: E \to \Rplus$ (which we will equivalently view as a vector in $\Rplus^E$).
\ifipco{We use $\nabla f_v$ to denote the net flow into $v \in V$; a (static) $s$-$t$-flow satisfies the usual flow conservation conditions.}{%
For $v \in V$, we define the \emph{net flow} at $v$ (denoted $\nabla f_v$) to be the quantity
\[ \nabla f_v := f(\delta^-(v)) - f(\delta^+(v)) = \sum_{e \in \delta^-(v)} f_e - \sum_{e \in \delta^+(v)} f_e. \]
}
\ipcoskip{
We say that $f$ is a \emph{(static) $s$-$t$-flow of value $Q$} if\ifipco{:}{}
\ifipco{\begin{inparaenum}[(i)]}{\begin{enumerate}[(i)]}
    \item $\nabla f_v = 0$ for all $v \in V \setminus \{s,t\}$, with $\nabla f_t = -\nabla f_s = Q$; and
    \item $f_e \leq \nu_e$ for all $e \in E$.
\ifipco{\end{inparaenum}}{\end{enumerate}}}
Given an $s$-$t$-flow $f$, its \emph{residual network} $G^f = (V, E^f)$ is defined by
\[ E^f = \{ vw : vw \in E \text{ and } f_{vw} < \nu_{vw}\} \cup \{ vw : wv \in E \text{ and } f_{wv} > 0 \}. \]
Call arcs in $E^f \cap E$ \emph{forward arcs} and arcs in $E^f \setminus E$ \emph{backwards arcs}.
The \emph{residual capacity} $\nu^f_e$ of an arc $e \in E^f$ is then $\nu^f_{vw} = \nu_{vw} - f_{vw}$ for $vw$ a forward arc, 
and $\nu^f_{vw} = f_{wv}$ for $vw$ a backwards arc.
We also define $\tau_{vw} = -\tau_{wv}$ for all backwards arcs $vw$.

Given a subset $F \subseteq E$, we use $\chi(P)$ to denote the characteristic vector of $F$.
\ipcoskip{In particular, if $P$ is a path from $v$ to $w$, then $\chi(P)$ is a unit flow from $v$ to $w$.}%
\ifipco{}{\par}%
We make the definitions $\rev{E} := \{ wv : vw \in E\}$ and $\bid{E} := E \cup \rev{E}$.
Given $f,g \in \Rplus^{E}$, we define $f+g$ in the obvious way,
and also define $f-g \in \Rplus^{\bid{E}}$, by interpreting a negative value on $vw$ instead as a positive value on $wv$.

\ipcoskip{
Given a choice of value $Q$, a \emph{minimum cost flow} is an $s$-$t$-flow $f^*$ minimizing $\sum_{e \in E} f^*_e \tau_e$ (amongst all $s$-$t$-flows of value $Q$).
An $s$-$t$-flow $f$ (of the correct value) is a minimum cost flow if and only if $E^{f}$ contains no negative cost cycles, i.e., cycles $C \subseteq E^{f}$ with $\tau(C) < 0$.
}

\paragraph{Flows over time.}
Consider some $f: E \times \R \to \Rplus$. 
We will generally write $f_e(\theta)$ rather than $f(e, \theta)$.
Define the \emph{net flow into $v$ at time $\theta$} by
\[ \nabla f_v(\theta) := \sum_{e \in \delta^-(v)} f_e(\theta - \tau_e) - \sum_{e \in \delta^+(v)} f_e(\theta). \]
Note that $f_e(\theta)$ represents the flow \emph{entering} arc $e$ at time $\theta$; this flow will exit the arc at time $\theta + \tau_e$ (explaining the asymmetry between the terms for flow entering and flow leaving in the above).

We say that $f$ is a \emph{flow over time of value $Q$} if the following hold.
\begin{enumerate}[(i)]
    \item $f$ has compact support (i.e., for some $K$, $f_e(\theta) = 0$ whenever $|\theta| > K$).
    \item $\int_{-\infty}^\infty \nabla f_v(\theta)d\theta = Q(\ind_{v = t} - \ind_{v=s})$ for all $v \in V$.
    \item $\int_{-\infty}^\xi \nabla f_v(\theta)d\theta \geq 0$ for all $v \in V \setminus \{s,t\}$ and $\xi \in \R$. 
    \item $f_e(\theta) \leq \nu_e$ for all $e \in E$ and $\theta \in \R$.
\end{enumerate}
Note that this definition allows for flow to wait at a node; to disallow this and consider only \emph{flows over time without waiting}, we would replace (iii) 
with \ifipco{the condition that $\nabla f_v(\theta) = 0$ for all $v \in V \setminus \{s,t\}$ and  $\theta \in \R$.}%
{the condition
\begin{itemize}[(iii$'$)]
    \item $\nabla f_v(\theta) = 0$ for all $v \in V \setminus \{s,t\}$ and  $\theta \in \R$.
\end{itemize}
}

We also have a natural notion of a residual network in the flow over time setting.
Define, for any flow over time $f$ and $\theta \in \R$, 
\[
    E^f(\theta) = \{ vw : vw \in E \text{ and } f_{vw}(\theta) < \nu_{vw}\} \cup \{ vw : wv \in E \text{ and } f_{wv}(\theta - \tau_{wv}) > 0\}.
\]

\paragraph{Minimizing scheduling cost.}
We are concerned with the following optimization problem.
Given a \emph{scheduling cost function} $\rho: \R \to \Rplus$, as well as a value $\alpha > 0$, determine a flow over time $f$ of value $Q$ that minimizes the sum of the \emph{commute cost} $\alpha \sum_{e \in E} \intR f_e(\theta)d\theta$ and the
\emph{scheduling cost} 
$\intR \nabla f_t(\theta)\cdot \rho(\theta) d\theta$.
As already discussed, we assume that $\rho$ satisfies the growth bound, i.e., that
    $\theta \to \rho(\theta) + \alpha \theta$ is nondecreasing.
This ensures that waiting at $t$ is not needed, which is in fact disallowed by our definition\footnote{Were this really needed, one could simply add a dummy arc $tt'$ to a new sink $t'$.}, and makes various arguments cleaner.
We will also make the assumption that $\rho$ is strongly unimodal\footnote{I.e., (strictly) decreasing until some moment, and then (strictly) increasing.}.
We then assume w.l.o.g.\ that the minimizer of $\rho$ is at $0$, and that $\rho(0) = 0$.
For further technical convenience, by adjusting $\rho$ on a set of measure zero we take 
$\rho$ to be lower semi-continuous.

The unimodal assumption is not necessary; the algorithm and analysis can be extended to essentially general $\rho$, under some very weak technical conditions.
\ifipco{We postpone discussion to the full version of the paper; no major new technical ideas are needed.}%
{We postpone this discussion to the end of the paper.}

We also assume that we are able to query $\rho^{-1}(y)$ for a given rational $y > 0$, obtaining a pair of solutions (one positive, one negative) of moderate bit complexity.
\ipcoskip{Alternatively, we can consider the case where $\rho$ is represented as a piecewise linear function.}

\section{A combinatorial algorithm}\label{sec:algorithm} 

In this section we present an algorithm that computes an optimal flow over time, assuming that $\rho$ is strongly unimodal.
The proof of optimality is discussed in \Cref{sec:optimality}.

We begin by recalling the \emph{successive shortest paths (SSP)} algorithm for computing a minimum cost static flow.
It is not a polynomial time algorithm, 
so it is deficient as an algorithm for static flows, but it provides a structure that is relevant for flows over time.
This is of course well known from its role in constructing earliest arrival flows, which we will briefly detail.

The SSP algorithm construct a sequence of paths $(P_1, P_2, \ldots)$ and associated amounts $(x_1, x_2, \ldots)$ inductively as follows.
Suppose $P_1, \ldots, P_j$ and $x_1, \ldots, x_j$ have been defined.
Let $f^{(j)} = \sum_{i=1}^j x_i \chi(P_i)$,
and let $G_j$ denote the residual graph of $f^{(j)}$.
Also let $d_j(v,w)$ denote the length (w.r.t.\ arc delays $\tau$ in $G_j$) of a shortest path from $v$ to $w$ in $G_j$ (this may be infinite).
By construction, $G_j$ will contain no negative cost cycles, so that $d_j$ is computable.
If $d_j(s,t) = \infty$, we are done; set $m := j$.
Otherwise, define $P_{j+1}$ to be any shortest $s$-$t$-path in $G_j$, and $x_{j+1}$ the minimum capacity in $G_j$ of an arc in $P_{j+1}$.
It can be shown that 
$\sum_{j=1}^r \tilde{x}_j \chi(P_j)$, with $\sum_{j=1}^r \tilde{x}_j = Q$ and $\tilde{x}_j = x_j$ for $j < r$, $0 \leq \tilde{x}_r \leq x_r$, is a minimum cost flow of value $Q$, as long as $Q$ is not larger than the value of a maximum flow.

To construct an earliest arrival flow of value $Q$ and time horizon $T$, 
we (informally) send flow at rate $x_j$ along path $P_j$ for the time interval $[0, T - \tau(P_j)]$, for each $j \in [m]$ (if $\tau(P_j) > T$, we send no flow along the path).
By this, we mean that for each $e=vw \in P_j$, we increase by $x_j$ the value of $f_e(\theta)$ for 
$\theta \in [d_{j-1}(s,v), T - d_{j-1}(v,t)]$ (or if $e$ is a backwards arc, we instead decrease $f_{wv}(\theta)$).
An argument is needed to show that this defines a valid flow, since we must not violate the capacity constraints, and moreover, $P_j$ may contain reverse arcs not present in $G$.

We are now ready to describe our algorithm for minimizing the disutility,
which is a natural variation on the earliest arrival flow algorithm.
It is also constructed from the successive shortest paths, but using a \emph{cost horizon} rather than a \emph{time horizon}.
For now, consider $C$ to be a given value (it will be the ``cost horizon'').
For each $j \in [m]$ with $\alpha d_{j-1}(s,t) \leq C$, we send flow at rate $x_j$ along path $P_j$ for the time interval $[a_j, b_j]$ 
chosen maximally so that 
\ifipco{$\rho(\xi + d_{j-1}(s,t)) \leq C - \alpha d_{j-1}(s,t)$ for all $\xi \in [a_j, b_j]$.}%
    {\[ \rho(\xi + d_{j-1}(s,t)) \leq C - \alpha d_{j-1}(s,t) \quad \text{for all} \quad \xi \in [a_j, b_j]. \]}
    (If $\rho$ is continuous, then of course $\rho(a_j + d_{j-1}(s,t)) = \rho(b_j + d_{j-1}(s,t)) = C - \alpha d_{j-1}(s,t)$.)
Note that a user leaving at time $a_j$ or $b_j$ and using path $P_j$, without waiting at any moment, incurs disutility $C$;
whereas a user leaving at some time $\theta \in (a_j, b_j)$ and using path $P_j$ will incur a strictly smaller total cost.

As we will shortly argue, this results in a feasible flow over time $f$.
Given this, its value will be $\sum_{j=1}^m x_j(b_j - a_j)$.
It is easy to see that this value changes continuously and monotonically with $C$ (here we use the strong unimodality).
Thus a bisection search can be used to determine the correct choice of $C$ for a given value $Q$.
Alternatively, bisection search can be avoided by using Megiddo's parametric search technique~\cite{Megiddo}; this will ensure a strongly polynomial running time, if queries to $\rho^{-1}$ are considered to be of unit cost.

\ipcoskip{
This requires only oracle access to $\rho^{-1}$ (and reasonable control on the bit complexity of $\rho^{-1}(y)$ in terms of $y$).
If $\rho$ is piecewise linear with not too many breakpoints (as is the case, in particular, for the ``standard'' $\beta$/$\gamma$ choice generally used in the transportation economics literature), a third option presents itself.
One can generate the entire parametric curve of $C$ as a function of $Q$, which will also be piecewise linear,
from which the correct choice of $C$ can readily be determined.
This is fairly straightforward, and we omit the details.
}

\paragraph{Feasibility.}
\ipcoskip{In the following we show that the flow resulting flow $f$ is a feasible flow over time. }%
Given a vertex $v\in V$, a time $\theta\in \mathbb{R}$ and $j\in[m]$, let
\[ 
    c_j(v,\theta)=\alpha d_{j-1}(s,t) + \rho(\theta + d_{j-1}(v,t)).
\]
If $v\in P_j$ then $c_j(v,\theta)$ is the travel cost of a user that utilizes path $P_j$ and passes through node $v$ at time $\theta$; 
there does not seem to be a simple interpretation if $v \notin P_j$ however.
Now define
\begin{equation}
    J(v,\theta) = \max \{ j \in [m] : c_j(v,\theta) \leq C \},
\end{equation}
with the convention that the maximum over the empty set is $0$.
The motivation for this definition comes from the following theorem, which completely characterizes $f$.
\ipcoskip
{(If preferred, one could even think of this theorem as providing the definition of $f$.)}
\begin{theorem}\label{thm:fvalue}
    $f_{vw}(\theta) = f_{vw}^{(J(v,\theta))}$ for any $vw \in E$ and $\theta \in \R$.
\end{theorem}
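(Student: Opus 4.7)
The plan is to unpack the algorithm's construction of $f$ into an explicit sum of signed, ``painted'' contributions from the successive shortest paths, and then to match this term-by-term with the static flow $f^{(J(v,\theta))}$. Directly from the description of the algorithm,
\[
f_{vw}(\theta) \;=\; \sum_{j\,:\,vw\in \bid{P_j}} \sigma_j x_j\cdot\ind\bigl[\theta\in I_j(v)\bigr],
\]
where $\sigma_j=+1$ if $vw\in P_j$ (traversed forward) and $\sigma_j=-1$ if $wv\in P_j$ (traversed in reverse), and $I_j(v):=[a_j+d_{j-1}(s,v),\,b_j+d_{j-1}(s,v)]$ is the interval of times during which the augmentation along $P_j$ affects arc $vw$ at its tail $v$. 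On the other side, by definition $f^{(J(v,\theta))}_{vw}=\sum_{j\le J(v,\theta),\;vw\in\bid{P_j}}\sigma_j x_j$. Every index contributing to either sum has $v\in P_j$, so the shortest-path decomposition $d_{j-1}(s,t)=d_{j-1}(s,v)+d_{j-1}(v,t)$ is available throughout.

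First I would translate ``$\theta\in I_j(v)$'' into a condition on $c_j$. Using the decomposition, $\theta\in I_j(v)$ iff $\theta+d_{j-1}(v,t)\in[a_j+d_{j-1}(s,t),\,b_j+d_{j-1}(s,t)]$, and by strong unimodality of $\rho$ together with the maximality built into the choice of $[a_j,b_j]$ as the interval where $\rho(\xi+d_{j-1}(s,t))\le C-\alpha d_{j-1}(s,t)$, this is equivalent to $\rho(\theta+d_{j-1}(v,t))\le C-\alpha d_{j-1}(s,t)$, i.e.\ to $c_j(v,\theta)\le C$. So for every $j$ with $v\in P_j$, $\theta\in I_j(v)\iff c_j(v,\theta)\le C$.

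The main step is then a monotonicity lemma: for $i<j$ with $v\in P_i\cap P_j$, $c_i(v,\theta)\le c_j(v,\theta)$. Writing the difference and using the decomposition at both $i$ and $j$,
\begin{align*}
c_j(v,\theta)-c_i(v,\theta) =\;& \alpha\bigl(d_{j-1}(s,v)-d_{i-1}(s,v)\bigr) \\
&+\bigl[\rho(\theta+d_{j-1}(v,t))+\alpha d_{j-1}(v,t)\bigr]-\bigl[\rho(\theta+d_{i-1}(v,t))+\alpha d_{i-1}(v,t)\bigr].
\end{align*}
The first term is $\ge0$ by the standard SSP fact that $d_{k-1}(s,u)$ is nondecreasing in $k$ for every $u$. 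The bracketed difference is $\ge0$ because $d_{k-1}(v,t)$ is likewise nondecreasing in $k$ (by the symmetric argument on $G^\top$) and because $\xi\mapsto\rho(\xi)+\alpha\xi$ is nondecreasing, which is exactly the growth bound assumed on $\rho$. Combined with the trivial direction from the very definition of $J(v,\theta)$, this yields $c_j(v,\theta)\le C\iff j\le J(v,\theta)$ for every $j$ with $v\in P_j$, so the two sums are indexed by the same set $\{j:v\in P_j,\;vw\in\bid{P_j},\;j\le J(v,\theta)\}$ and the theorem follows.

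The main obstacle I expect is the monotonicity step, specifically when $J(v,\theta)$ happens to be an iteration at which $v$ does \emph{not} lie on $P_{J(v,\theta)}$: the bound $c_{J(v,\theta)}(v,\theta)\le C$ supplied by the definition of $J$ then has to be transferred down to the largest iteration $j^*\le J(v,\theta)$ at which $v\in P_{j^*}$, passing through iterations where the restricted monotonicity above does not directly apply. The growth bound is what carries the argument there too: it controls any ``drop'' in $\rho(\theta+d_{k-1}(v,t))$ caused by an increase of $d_{k-1}(v,t)$ on the decreasing branch of $\rho$, pairing it against the corresponding increase in $\alpha d_{k-1}(s,t)$ guaranteed by SSP. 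Keeping this pairing tight across all intermediate iterations is where the bookkeeping is most delicate.
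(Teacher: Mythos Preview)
Your overall strategy is exactly the paper's: rewrite $f_{vw}(\theta)$ as a signed sum over paths, identify ``path $P_j$ contributes at time $\theta$'' with $c_j(v,\theta)\le C$, and then use monotonicity of $j\mapsto c_j(v,\theta)$ together with the definition of $J(v,\theta)$ to match the index sets. The computation you do once monotonicity is in hand is correct and essentially identical to the paper's.

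The gap is precisely the one you flag yourself, and it is real. Your monotonicity argument uses the decomposition $d_{k-1}(s,t)=d_{k-1}(s,v)+d_{k-1}(v,t)$ at \emph{both} endpoints $i$ and $j$, so it only compares $c_i$ and $c_j$ when $v\in P_i\cap P_j$. That restricted monotonicity is not enough to conclude. Concretely: you need that for $j\in K:=\{k:v\in P_k\}$ one has $c_j(v,\theta)\le C\iff j\le J(v,\theta)$, but $J(v,\theta)$ is defined as a maximum over \emph{all} $k$. Nothing you have proved rules out a pattern like $c_1>C$, $c_2\le C$, $c_3>C$ with $v\in P_1,P_3$ but $v\notin P_2$; your restricted monotonicity only forces $c_1\le c_3$, yet here $J(v,\theta)=2$ and $j=1\le J(v,\theta)$ with $c_1>C$. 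Your final paragraph correctly senses that what is really needed is $d_j(v,t)-d_{j-1}(v,t)\le d_j(s,t)-d_{j-1}(s,t)$ for \emph{every} $j$ and \emph{every} $v$, but you do not prove it; invoking ``standard SSP monotonicity of $d_{k-1}(s,v)$ and $d_{k-1}(v,t)$'' does not give this mixed inequality.

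The paper closes exactly this gap by proving the unrestricted statement that $c_j(v,\theta)$ is nondecreasing in $j$ for \emph{all} $v$ (its \Cref{lem:monotonicityStraight}). The proof is a short flow argument rather than distance bookkeeping: with $Q$ a shortest $v$-$t$ path in $G_{j-1}$, the unit $v$-$t$ flow $\chi(P_{j+1})-\chi(P_j)+\chi(Q)$ has support in $G_j$, so comparing its cost to $d_j(v,t)$ yields $d_j(v,t)\le \tau(P_{j+1})-\tau(P_j)+d_{j-1}(v,t)$; the growth bound on $\rho$ then finishes exactly as you outline. Once you have this unrestricted monotonicity, no ``transfer through intermediate iterations'' is needed and your matching of index sets goes through immediately.
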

\ifipco{Since $f$ has value $Q$ and satisfies flow conservation by construction, the feasibility of $f$ is an immediate corollary of this theorem.}%
{\begin{corollary}
    $f$ is a feasible flow over time. \nnote{Possibly mention it has no waiting.}
\end{corollary}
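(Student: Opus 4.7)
The plan is to verify the four defining conditions (i)--(iv) of a flow over time. Conditions (i) (compact support) and (iv) (capacity) follow directly from Theorem~\ref{thm:fvalue}, while (ii) (correct total mass) and (iii) (nonnegative prefix integrals at intermediate nodes) are read off from the path-based construction of $f$.

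For the pointwise conditions, Theorem~\ref{thm:fvalue} gives $f_{vw}(\theta)=f_{vw}^{(J(v,\theta))}$, and each iterate $f^{(j)}$ produced by the SSP algorithm is a feasible static $s$--$t$ flow on $E$, so $0\le f_{vw}(\theta)\le\nu_{vw}$, yielding (iv). For (i), observe that strong unimodality of $\rho$ forces $\rho(\xi)\to\infty$ as $|\xi|\to\infty$; together with the finiteness of $\{d_{j-1}(v,t):j\in[m]\}$ this gives $c_j(v,\theta)>C$ for every $j\in[m]$ whenever $|\theta|$ is sufficiently large, so $J(v,\theta)=0$ and $f_{vw}(\theta)=f_{vw}^{(0)}=0$ there.

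For the conservation conditions, I would argue the pointwise identity $\nabla f_v(\theta)=0$ at every intermediate $v$ directly from the construction, which immediately implies (iii) in its stronger ``no waiting'' form. Recall that $f$ is the superposition over $j\in[m]$ of the path flow that sends rate $x_j$ along $P_j$ during $[a_j,b_j]$; because $P_j$ is a shortest path in $G_{j-1}$, a user leaving $s$ at time $\theta\in[a_j,b_j]$ is at each $v\in P_j$ at the single instant $\theta+d_{j-1}(s,v)$. A short case analysis on the type (forward vs.\ backward) of the two arcs of $P_j$ incident to an intermediate $v$ shows that the contribution of $P_j$ to $\nabla f_v$ at that instant is zero: a forward predecessor arc contributes $+x_j$ to the net inflow, a backward predecessor arc decreases the outflow by $x_j$ and so also contributes $+x_j$, while the successor arc contributes $-x_j$ symmetrically in both cases. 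Summing over $j$ gives $\nabla f_v(\theta)=0$ pointwise. Integrating the outflow at $s$ (or the inflow at $t$) evaluates to $\sum_{j=1}^{m} x_j(b_j-a_j)=Q$ by the choice of the cost horizon $C$, yielding (ii).

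The main obstacle I expect is the sign- and timing-bookkeeping for backward arcs in some $P_j$: ``sending rate $x_j$ along a backward arc $uv\in P_j$'' (with $vu\in E$) must be interpreted as \emph{decreasing} $f_{vu}$ over a time window shifted by $d_{j-1}(s,u)-\tau_{vu}=d_{j-1}(s,v)$ (using tightness of $uv$ in $G_{j-1}$), and one must be sure that these cumulative decreases, possibly across several paths, never push $f_{vu}$ below zero. This is precisely where Theorem~\ref{thm:fvalue} is essential: it certifies that the instantaneous profile is always the feasible static flow $f^{(J(\cdot,\theta))}$, so both nonnegativity and the capacity upper bound on $f_{vw}(\theta)$ are inherited from the SSP iterates without any separate cumulative bookkeeping.
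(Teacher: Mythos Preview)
Your proof is correct and follows essentially the same approach as the paper: value $Q$ and flow conservation (indeed the pointwise identity $\nabla f_v(\theta)=0$, giving the ``no waiting'' property) come from the path-based construction, while nonnegativity and the capacity bound follow from Theorem~\ref{thm:fvalue} via feasibility of the SSP iterates $f^{(j)}$. One minor imprecision: strong unimodality alone does not force $\rho(\xi)\to\infty$; compact support instead follows because the intervals $[a_j,b_j]$ must be bounded for the chosen $C$ to yield the finite total mass $Q$ (equivalently, $J(v,\theta)=0$ for $|\theta|$ large since otherwise some $[a_j,b_j]$ would be unbounded).
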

\begin{proof}
    By the way that we constructed $f$, it has value $Q$ and satisfies flow conservation. 
    Only nonnegativity and the capacity constraint remain, which follows from the theorem.
\end{proof}
}
\ifipco{We sketch the proof in the appendix.}{%
Before proving \Cref{thm:fvalue}, we need the following lemma.
\begin{lemma}\label{lem:monotonicityStraight}
    $c_j(v,\theta)$ is nondecreasing with $j$ for any $\theta \in \R$. 
\end{lemma}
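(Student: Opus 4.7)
The plan is to reduce the claim to two monotonicity properties of the SSP residual distances: (A) $d_j(v,t) \ge d_{j-1}(v,t)$, and (B) $d_j(v,t) - d_{j-1}(v,t) \le d_j(s,t) - d_{j-1}(s,t)$, for every vertex $v$ and every $j$. Given (A), the inequality $\theta + d_j(v,t) \ge \theta + d_{j-1}(v,t)$ together with the growth bound ($\xi \mapsto \rho(\xi) + \alpha\xi$ nondecreasing) yields $\rho(\theta + d_j(v,t)) \ge \rho(\theta + d_{j-1}(v,t)) - \alpha(d_j(v,t) - d_{j-1}(v,t))$. Substituting into $c_{j+1}(v,\theta) - c_j(v,\theta) = \alpha(d_j(s,t) - d_{j-1}(s,t)) + \rho(\theta + d_j(v,t)) - \rho(\theta + d_{j-1}(v,t))$ and applying (B) gives $c_{j+1}(v,\theta) - c_j(v,\theta) \ge 0$.

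For (A), I introduce the reverse potential $\bar\pi(v) := d_{j-1}(v,t)$. The reduced cost $\tau_{uw} + \bar\pi(w) - \bar\pi(u)$ is nonnegative on every arc of $G_{j-1}$ by triangle inequality. For the new arcs in $G_j$---reverses $wv$ of forward arcs $vw \in P_j$---the suffix identity $d_{j-1}(v,t) = \tau_{vw} + d_{j-1}(w,t)$ (valid because $P_j$ is a shortest $s$-$t$ path in $G_{j-1}$) implies the reduced cost is zero. Hence $\bar\pi$ remains a valid potential on $G_j$, and applying it to a shortest $v$-$t$ path in $G_j$ gives a nonnegative reduced cost of $d_j(v,t) - d_{j-1}(v,t)$.

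For (B), I construct an explicit $v$-$t$ walk in $G_j$ of length $d_{j-1}(v,t) + d_j(s,t) - d_{j-1}(s,t)$. Let $Q$ be a shortest $v$-$t$ path in $G_{j-1}$, and let $u$ be the first vertex on $Q$ (starting from $v$) that also lies on $P_j$; such $u$ exists since $t \in P_j$. By the choice of $u$, the prefix $Q[v,u]$ shares no arc with $P_j$ (any $P_j$-arc has both endpoints on $P_j$), so it persists in $G_j$ and has length $d_{j-1}(v,u) = d_{j-1}(v,t) - d_{j-1}(u,t)$. Since $u \in P_j$, the reverse of the $s$-$u$ subpath of $P_j$ is a $u$-$s$ walk in $G_j$ of length $-d_{j-1}(s,u)$; concatenating with a shortest $s$-$t$ path in $G_j$ extends it to a $u$-$t$ walk of length $d_j(s,t) - d_{j-1}(s,u)$. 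Gluing this after $Q[v,u]$ and using $d_{j-1}(s,u) + d_{j-1}(u,t) = d_{j-1}(s,t)$ (again since $u \in P_j$) gives $d_j(v,t) \le d_{j-1}(v,t) + d_j(s,t) - d_{j-1}(s,t)$, which is (B).

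The main obstacle will be (B): one must choose the right vertex $u$, verify that the selected prefix of $Q$ avoids every $P_j$-arc, and carefully account for the reverse-path lengths. Property (A) is essentially immediate once $\bar\pi$ is seen to remain a valid potential after augmenting along $P_j$.
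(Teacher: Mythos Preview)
Your proof is correct and follows essentially the same route as the paper: both reduce to the inequality $d_j(v,t) \le d_{j-1}(v,t) + d_j(s,t) - d_{j-1}(s,t)$ (your (B)) and then apply the growth bound together with $d_j(v,t) \ge d_{j-1}(v,t)$ (your (A)). The only difference is presentational: to prove (B), the paper considers the unit $v$--$t$ flow $g = \chi(P_{j+1}) - \chi(P_j) + \chi(Q)$ and observes its support lies in $G_j$, whereas you build an explicit $v$--$t$ walk $Q[v,u] \cdot \rev{P_j[s,u]} \cdot P_{j+1}$ in $G_j$; these are the same idea, and the paper simply takes (A) as a known SSP fact rather than reproving it via a potential.
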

\begin{proof}   
Consider any $j \in [m-1]$; we show that $c_{j+1}(v,\theta) \geq c_j(v,\theta)$.
Suppose $Q$ is a shortest $v$-$t$-path in $G_{j-1}$, so $\tau(Q) = d_{j-1}(v,t)$. 
Consider the unit $v$-$t$ flow $g = \chi(P_{j+1}) - \chi(P_j) + \chi(Q)$ in $\bid{E}$.
Now observe that the support of $g$ is contained in $G_{j}$:
$P_{j+1}$ and $\rev{P_j}$ are certainly contained in $G_j$; and if $e \in Q \cap (E_{j-1} \setminus E_j)$, then $e \in P_j$.
Since $G_j$ contains no negative cost cycles, the cost of $g$ is at least that of a shortest $v$-$t$-path in $G_j$, and so
$d_{j}(v,t) \leq  \tau(P_{j+1})- \tau(P_{j}) +\tau(Q)$.
Finally, we can conclude 
\begin{align*}
\alpha d_{j}(s,t) +\rho(\theta+d_{j}(v,t)) 
&= \alpha d_{j}(s,t) +\rho(\theta+d_{j-1}(v,t)) - \rho(\theta+d_{j-1}(v,t))+\rho(\theta+d_{j}(v,t))\\
&\geq \alpha d_{j}(s,t) +\rho(\theta+d_{j-1}(v,t)) - \alpha(d_{j}(v,t)-d_{j-1}(v,t))\\
&\geq \alpha d_{j-1}(s,t) +\rho(\theta+d_{j-1}(v,t)),
\end{align*}
where the first inequality follows from the growth assumption, using $d_{j}(v,t)\geq d_{j-1}(v,t)$.
\end{proof}

\begin{proofof}{\Cref{thm:fvalue}}
Fix some $vw \in E$ and $\theta \in \R$.
Consider now any $P_j$ (with $\alpha \tau(P_j) \leq C$, so that it is used for a nontrivial interval), with $vw \in P_j$.
Since $P_j$ is a shortest path in $G_{j-1}$, 
if we send flow along this path starting from some time $\xi$, it will arrive at $v$ at
time $\xi + d_{j-1}(s,v)$.
Considering the definition of the interval $[a_j, b_j]$, we see that $P_j$ contributes flow to $vw$ at time $\theta$ if $c_j(v,\theta) \leq C$.
By \Cref{lem:monotonicityStraight}, this occurs precisely if $j \leq J(v,\theta)$.

Considering in similar fashion paths $P_j$ with $wv \in P_j$ (and noting that $J(w,\theta + \tau_{vw}) = J(v,\theta)$), we determine that
\[
   f_{vw}(\theta) = \sum_{\substack{j: vw \in P_j\\ j \leq J(v,\theta)}} x_j - 
         \sum_{\substack{j: wv \in P_j\\ j \leq J(v,\theta)}} x_j 
       = f^{(J(v,\theta))}_{vw}. \qedhere
\]
\end{proofof}
}

\section{Optimality}\label{sec:optimality}
\ipcoskip{In this section, we show that our proposed algorithm does return an optimal flow.}

\ifipco{\paragraph{Duality-based certificates of optimality.}}{\subsection{Duality-based certificates of optimality}\label{sec:duality}}

    We can write the problem we are interested in as a (doubly) infinite linear program as follows:
\ifipco{
\medskip

\noindent
\begin{tabular}{l r l l l r}
$ \min\quad$ & \multicolumn{5}{l}{
$ 
\int_{-\infty}^{\infty} \rho(\theta) \nabla f_{t}(\theta) d\theta 
+ \alpha \sum_{e\in E}\tau_e\int_{-\infty}^{\infty} f_e(\theta) d\theta 
+ \alpha \sum_{v\in V\setminus\{s,t\}}\int_{-\infty}^{\infty} z_v(\theta) d\theta 
$}\\[0.21cm] 
$ \text{s.t.}$ & \ifipco{$ 
 -\int_{-\infty}^{\infty} \nabla f_{s}(\theta) d\theta = \int_{-\infty}^{\infty} \nabla f_{t}(\theta) d\theta $ & $ =$ & $ Q$}
{
$ \int_{-\infty}^{\infty} \nabla f_{s}(\theta) d\theta$ & $ =$ & $ -Q$}  
& &\multirow{3}{*}{\quad\tagarray \label{eq:primal}}\\[0.2cm]
\ifipco{}{
& $ \int_{-\infty}^{\infty} \nabla f_{t}(\theta) d\theta$ & $ =$ & $ Q$ 
\\[0.2cm]}
& $ \int_{-\infty}^{\theta} \nabla f_v(\xi) d\xi $ 
& $ =$ & $ z_v(\theta)$ & $ \qquad\forall v\in V\setminus \{s,t\}, \theta \in \R$&\\[0.2cm]
& $ f_e(\theta)$ 
& $ \leq $ & $ \nu_e$ & $\qquad\forall e\in E, \theta \in \R$\\[0.2cm]
& $ z,f$ 
& $ \geq $ & $ 0$ & &\\[0.2cm]
\end{tabular}
}{
\begin{equation}\label{eq:primal} 
\begin{aligned}
\min \qquad 
\lpint_{-\infty}^{\infty} \rho(\theta) \nabla f_{t}(\theta) d\theta 
&+ \alpha \lpsum_{e\in E}\tau_e\lpint_{-\infty}^{\infty} f_e(\theta) d\theta 
+ \alpha \lpsum_{v\in V\setminus\{s,t\}}\lpint_{-\infty}^{\infty} z_v(\theta) d\theta 
\\
\text{s.t.} \qquad
\ifipco{
    -\lpint_{-\infty}^{\infty} \nabla f_{s}(\theta) d\theta &= \lpint_{-\infty}^{\infty} \nabla f_{t}(\theta) d\theta = Q\\}%
{    \lpint_{-\infty}^{\infty} \nabla f_{s}(\theta) d\theta &= -Q \\
\lpint_{-\infty}^{\infty} \nabla f_{t}(\theta) d\theta &= Q\\}
    \lpint_{-\infty}^{\theta} \nabla f_v(\xi) d\xi &=  z_v(\theta) \qquad \forall v\in V\setminus \{s,t\}, \theta \in \R\\
   f_e(\theta) &\leq \nu_e \qquad \forall e\in E, \theta \in \R \\
  z, f &\geq 0
\end{aligned}
\end{equation}
}

\noindent
Here, $z_v(\theta)$ represents the amount of flow waiting at node $v$ at time $\theta$ (which must always be nonnegative).
The travel cost is captured on a per-arc basis, including waiting time as well.

The following theorem provides a certificate of optimality of a feasible solution to \eqref{eq:primal}.
\begin{theorem}\label{thm:optimality}
    Let $f$ be a flow over time with value $Q$, and 
    suppose that $\pi: V \times \R \to \R$ satisfies the following, for some choice of $C$:
    \begin{enumerate}[(i)]
        \item \label{prop:noninc} $\theta \to \pi_v(\theta) - \alpha \theta$ is nonincreasing.
        \item \label{prop:piedge} $\pi_w(\theta + \tau_{vw}) \leq \pi_v(\theta) + \alpha \tau_{vw}$ for all $\theta \in \R, vw \in E^f(\theta)$.
        \item \label{prop:pis}
    $\pi_s(\theta) = 0$ for all $\theta \in \R$.
\item \label{prop:pit} $\pi_t(\theta) = (C - \rho(\theta))^+$ for all $\theta \in \R$, and $\nabla f_t(\theta) = 0$ whenever $\rho(\theta) > C$.
    \end{enumerate}
Then $f$ is an optimal solution.
\end{theorem}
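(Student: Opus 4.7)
The plan is to prove optimality by weak LP duality: for every feasible flow over time $g$ of value $Q$, I will show $\cost(g) \geq \cost(f)$. Define the reduced cost of arc $vw$ at time $\theta$ by
\[ r_{vw}(\theta) := \alpha \tau_{vw} + \pi_v(\theta) - \pi_w(\theta + \tau_{vw}). \]

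First I bound $\cost(g)$ from below using $\pi$. By (iv), $\rho(\theta) \geq C - \pi_t(\theta)$ everywhere. Assuming (w.l.o.g.) that $t$ has no outgoing arcs so that $\nabla g_t \geq 0$ pointwise, this yields $\int \rho\, \nabla g_t\, d\theta \geq CQ - \int \pi_t\, \nabla g_t\, d\theta$, with equality for $g=f$ by the second clause of (iv). For each $v \neq s,t$, an integration by parts (with vanishing boundary terms, as $z^g_v$ has compact support) gives
\[ \alpha \int z^g_v\, d\theta + \int \pi_v\, \nabla g_v\, d\theta \;=\; -\int z^g_v\, d(\pi_v(\theta) - \alpha\theta) \;\geq\; 0, \]
where the inequality uses (i). Expanding $\nabla g_v$ and swapping the order of summation, one checks that $\alpha \sum_e \tau_e \int g_e\, d\theta - \sum_v \int \pi_v \nabla g_v\, d\theta = \sum_{vw \in E} \int g_{vw} r_{vw}\, d\theta$. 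Combining these three manipulations, and using $\pi_s \equiv 0$ from (iii), yields
\[ \cost(g) \;\geq\; CQ + \sum_{vw \in E} \int g_{vw}(\theta)\, r_{vw}(\theta)\, d\theta. \]

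Next I use condition (ii) to compare the arc-sum for $g$ to that for $f$. The key pointwise claim is $(g_{vw}(\theta) - f_{vw}(\theta))\, r_{vw}(\theta) \geq 0$ for every arc $vw \in E$ and time $\theta$. If $f_{vw}(\theta) < \nu_{vw}$, then the forward arc $vw$ lies in $E^f(\theta)$, so (ii) gives $r_{vw}(\theta) \geq 0$; if $f_{vw}(\theta) > 0$, the backward residual arc $wv$ lies in $E^f(\theta + \tau_{vw})$, so applying (ii) there (with $\tau_{wv} = -\tau_{vw}$) gives $r_{vw}(\theta) \leq 0$. The three cases for $f_{vw}(\theta)$ (namely $=0$, strict, or $=\nu_{vw}$), combined with $0 \leq g_{vw}(\theta) \leq \nu_{vw}$, together verify the claim. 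Integrating gives $\sum_{vw} \int g_{vw} r_{vw} \geq \sum_{vw} \int f_{vw} r_{vw}$.

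To conclude, I apply the same derivation to $g = f$ with equality throughout: the scheduling bound is tight by (iv), and the integration-by-parts inequality is tight because the algorithmic $f$ has $z^f_v \equiv 0$. This yields $\cost(f) = CQ + \sum_{vw} \int f_{vw} r_{vw}\, d\theta$, and combining with the previous paragraph gives $\cost(g) \geq \cost(f)$, proving optimality. The main subtlety I anticipate is the backward-arc case of the sign analysis above: one must carefully translate (ii) across the time shift, recognizing that $r_{vw}(\theta)$ is the negative of the reduced cost of the residual arc $wv$ at time $\theta + \tau_{vw}$.
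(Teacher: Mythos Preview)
Your proof is correct and follows essentially the same weak-duality/complementary-slackness route as the paper: the same integration-by-parts step (your Stieltjes integral $-\int z^g_v\, d(\pi_v - \alpha\theta)\geq 0$ is exactly the paper's technical claim), and the same use of condition~(ii) on forward and backward residual arcs. The only cosmetic difference is that you compare $g$ directly to $f$ via the pointwise inequality $(g_{vw}-f_{vw})\,r_{vw}\geq 0$, whereas the paper bounds both $\cost(g)$ and $\cost(f)$ against the common pure-dual constant $CQ - \sum_e \int \delta_e \nu_e$; your $r_{vw}$ and the paper's $\delta_{vw}=(-r_{vw})^+$ encode the same information. One small remark: your equality step invokes $z^f_v\equiv 0$, which is a property of the algorithmic flow rather than a hypothesis of the theorem as stated---the paper's proof silently relies on the same fact, so you are not missing anything the paper provides.
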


Essentially, $\pi_v(\theta)$ are dual variables, and the assumptions of the theorem are that $f$ and $\pi$ satisfy the complementary slackness conditions.
There are many extensions of LP duality theory to infinite dimensional settings, e.g., \cite{Grinold,RSB92};
however the situation is subtle, since strong duality and even weak duality can fail~\cite{RSB92}.
\ifipco{We prefer to avoid technicalities and derive it directly (the proof is given in the full version).}%
{We prefer to avoid technicalities and derive it directly.}

\ipcoskip{
\begin{proofof}{\Cref{thm:optimality}}
    We will need the following technical lemma (obvious via integrating by parts in the case that $h$ is also absolutely continuous). 
    \begin{claim}\label{claim:decint}
    Let $h: \R \to \R$ be a nonincreasing function, and $z: \R \to \Rplus$ be an absolutely continuous nonnegative function with compact support.
    Then $\intR h(\theta)z'(\theta) d\theta \geq 0$.
\end{claim}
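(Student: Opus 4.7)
The plan is to reduce the claim to a Fubini computation after re-expressing the nonincreasing function $h$ in integral form against its associated (nonnegative) Lebesgue--Stieltjes measure. Concretely, since $h$ is nonincreasing it has bounded variation on every bounded interval, so it induces a nonpositive Borel measure $dh$; equivalently, $\mu := -dh$ is a nonnegative (sigma-finite) Borel measure on $\R$. Pick any $M$ large enough that the compact support of $z$ is contained in $(-\infty, M]$, so that $z(M) = 0$ and $\int_{-\infty}^M z'(\theta)\,d\theta = 0$.

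The key identity I would use is that for almost every $\theta \leq M$,
\[
   h(\theta) \;=\; h(M) \;+\; \mu\bigl([\theta, M)\bigr) \;=\; h(M) \;+\; \int_{\R} \mathbf{1}_{[\theta \leq \xi < M]}\, d\mu(\xi).
\]
(One minor care point: $h$ may have countably many jumps, and the identity above holds at all continuity points of $h$; altering $h$ on a Lebesgue-null set does not change the integral in the claim, so this is harmless.) Substituting into $\int h(\theta) z'(\theta)\, d\theta$, the first piece is $h(M)\int z'(\theta)\,d\theta = 0$ since $z$ has compact support.

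For the second piece, I apply Fubini's theorem (justified because $z'\in L^1$ and $\mu$ restricted to any bounded set containing the support of $z$ is finite, so the double integral is absolutely convergent) to swap the order of integration:
\[
  \int_{-\infty}^{M} \!\!\int_{\R} \mathbf{1}_{[\theta \leq \xi < M]}\, d\mu(\xi)\, z'(\theta)\, d\theta
    \;=\; \int_{\R} \!\Bigl(\int_{-\infty}^{\xi} z'(\theta)\, d\theta\Bigr) d\mu(\xi)
    \;=\; \int_{\R} z(\xi)\, d\mu(\xi),
\]
where the last equality uses absolute continuity of $z$ together with $z \equiv 0$ on $(-\infty, A]$ for $A$ sufficiently negative. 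Since $z \geq 0$ and $\mu$ is a nonnegative measure, this integral is nonnegative, giving the claim.

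The only real obstacle is the measure-theoretic bookkeeping, namely handling jumps of $h$ cleanly and verifying that the representation of $h$ by $\mu$ holds almost everywhere; once those are in place the Fubini swap does all the work. If one wants to avoid the Stieltjes machinery entirely, an alternative is to approximate $h$ uniformly on the support of $z$ by nonincreasing step functions $h_n$, prove the inequality for each $h_n$ by hand (each step function reduces to a finite nonnegative combination of terms $-(z(b_i) - z(a_i))$ that are nonpositive once one accounts for the sign of the jump), and pass to the limit using boundedness of $h$ on the support of $z$ and $z' \in L^1$. I would present the Stieltjes/Fubini proof as the main argument, as it is the shortest.
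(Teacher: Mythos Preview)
Your proposal is correct and is essentially the same argument as the paper's: the paper also represents $h$ (after shifting by a constant so that it is nonnegative on the support of $z$) by a nonnegative measure $\mu$ with $\mu([\theta,\infty)) = h(\theta)$ almost everywhere, and then applies Fubini to conclude $\int h z' = \int z\,d\mu \ge 0$. Your shift by $h(M)$ and the paper's additive shift play the same role, and the Fubini step is identical.
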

\begin{proof}
    Since $\intR z'(\theta)d\theta = 0$, we may assume by shifting if necessary that $h$ is nonnegative on the support of $z$.
    Let $\mu$ be a measure so that $\mu([\theta, \infty)) = h(\theta)$ for almost every $\theta$ in the support of $z$.
    We certainly have that for any $\theta$,
    \[
        \int_{-\infty}^{\theta} z'(\xi)d\xi = \bigl[z(\xi)\bigr]^{\theta}_{-\infty} = z(\theta) - 0 \geq 0. 
    \]
    Thus
    \begin{align*}
        \intR \intR \ind_{\xi \leq \theta} z'(\xi)d\xi d\mu(\theta) \geq 0,
    \end{align*}
    from which we obtain the result by applying Fubini's theorem.
\end{proof}

Define, for each $vw \in E$,
\[ 
    \delta_{vw}(\theta) = (\pi_w(\theta + \tau_{vw}) - \pi_v(\theta) - \alpha \tau_{vw})^+.
\]

Now let $g, \zg$ be any feasible solution to \eqref{eq:primal} with compact support. Consider any $v \in V \setminus \{s,t\}$, and observe that
\begin{equation}\label{eq:nonnegterm}
    \intR \bigl(\pi_v(\theta) \nabla g_v(\theta) + \alpha \zg_v(\theta)\bigr)d\theta 
    = \intR \bigl( (\pi_v(\theta) - \alpha \theta)\nabla g_v(\theta)\bigr)d\theta \;+\; \Bigl[\alpha \theta \zg_v(\theta)\Bigr]_{-\infty}^{\infty}
    \geq 0,
\end{equation}
by the above claim, exploiting property (\ref{prop:noninc}).

We have
\begin{align*}
    \cost(g) &= \intR \rho(\theta) \nabla g_{t}(\theta) d\theta 
+ \alpha \sum_{e\in E}\intR \tau_e g_e(\theta) d\theta 
+ \alpha \sum_{v\in V\setminus\{s,t\}}\intR \zg_v(\theta) d\theta \\
             &\overset{(*)}{\geq} \intR (C - \pi_t(\theta)) \nabla g_t(\theta)d\theta + \alpha \sum_{vw \in E} \intR \bigl(\pi_w(\theta + \tau_{vw}) - \pi_v(\theta) - \delta_{vw}(\theta)\bigr) g_e(\theta) d\theta\\
             &\qquad +  \alpha \sum_{v\in V\setminus\{s,t\}}\intR \zg_v(\theta) d\theta \\
             &\overset{(**)}{=}
                 CQ + \sum_{v \in V \setminus \{s,t\}} \intR [\pi_v(\theta)\nabla g_v(\theta) + \alpha \zg_v(\theta)]d\theta - \sum_{e \in E} \delta_e(\theta) g_e(\theta)\\
             &\overset{(***)}{\geq} CQ - \sum_{e \in E} \intR \delta_e(\theta) \nu_e d\theta.
\end{align*}
In the above, ($*$) holds by the definitions of $\pi_t$ and $\delta_e$; ($**$) follows by recombining the $g_e(\theta)$ terms and recalling that $\pi_s \equiv 0$ and that $g$ has value $Q$;
and ($*\!*\!*$) follows from \eqref{eq:nonnegterm}, and the inequalities $\delta_e(\theta) \geq 0$ and $g_e(\theta) \leq \nu_e$ that hold for all $e \in E$ and $\theta \in \R$.

Finally, observe that all of the inequalities in the above hold with equality if $g=f$.
Property (\ref{prop:piedge}) implies that if $f_{vw}(\theta) > 0$ (so that $wv \in E^f(\theta)$), then 
$\delta_{vw}(\theta) = \pi_w(\theta + \tau_{vw}) - \pi_v(\theta) - \alpha \tau_{vw}$, yielding equality in ($*$).
It also implies that if $f_{vw}(\theta) < \nu_{vw}$ (so that $vw \in E^f(\theta)$) then $\delta_{vw}(\theta) = 0$, yielding equality in ($*\!*\!*$).
\end{proofof}
}

\ipcoskip{As is often the case, the optimal dual solution also provides us the prescription for tolls to induce the optimum flow.
We delay this discussion to \Cref{sec:tolls}.}

\ifipco{\paragraph{The dual prescription.}}{\subsection{The dual prescription}\label{sec:prescription}}
We now give a certificate of optimality $\pi: V \times \R \to \R$ for \eqref{eq:primal} that satisfies the conditions of \ifipco{the above LP}{\Cref{thm:optimality}}. 
Given a vertex $v\in V$ and a time $\theta\in \mathbb{R}$
let 
\ifipco{%
    \begin{flalign*}
        \pi_v(\theta)&=\max\{\pi'_v(\theta), \bar{\pi}_v(\theta), 0\} &\\[0.3em]
    \text{where} \qquad \qquad
   \pi'_v(\theta) &= - \alpha d_{J(v,\theta)}(v,s),\\
   \bar{\pi}_v(\theta) &=C-\alpha d_{J(v,\theta)}(v,t) - \rho (\theta+d_{J(v,\theta)}(v,t)).
\end{flalign*}
}{%
\[\pi_v(\theta)=\max\{\pi'_v(\theta), \bar{\pi}_v(\theta), 0\} \] 
where
\begin{align*}
   \pi'_v(\theta) &= - \alpha d_{J(v,\theta)}(v,s),\\
   \bar{\pi}_v(\theta) &=C-\alpha d_{J(v,\theta)}(v,t) - \rho (\theta+d_{J(v,\theta)}(v,t)).
\end{align*}
}
Notice that $\pi_s(\theta)=0$ and $\pi_t(\theta)=\max\{C-\rho(\theta),0\}$ for all $\theta\in \R$ and thus conditions (\ref{prop:pis}) and (\ref{prop:pit}) of \Cref{thm:optimality} hold.
\ifipco{
        The bulk of the technical work is in showing the remaining conditions; we sketch some part of the proof in the appendix.
    }%
{For the remaining conditions, we begin with some basic facts about distance labels associated with successive shortest paths
(statements of a similar flavour can be found in Ahuja et al.~\cite{AMO93}, for example).}

\ipcoskip{
\begin{lemma}\label{lem:vsdec}
$d_j(v,s)$ is nonincreasing with $j$, for every $v \in V$.
\end{lemma}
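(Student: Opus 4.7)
The plan is to produce, for each vertex $v$ with $d_{j-1}(v,s)$ finite, a $v$-$s$-walk in $G_j$ of cost at most $d_{j-1}(v,s)$ (the infinite case is trivial). Let $Q$ be a shortest $v$-$s$-path in $G_{j-1}$, and consider the signed flow $h := \chi(Q) - \chi(P_j)$ viewed as a vector in $\R^{\bid{E}}$. A direct computation shows that $h$ has divergence $-1$ at $v$, $+2$ at $s$, $-1$ at $t$, and zero elsewhere.

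The heart of the argument is a structural claim: after canceling oppositely oriented contributions on each undirected edge, the support of $h$ lies entirely within $G_j$. Arcs of $Q$ that do not share orientation with any arc of $P_j$ persist from $G_{j-1}$ into $G_j$ (they are untouched by augmenting on $P_j$). Arcs in $Q \cap P_j$ cancel between $\chi(Q)$ and $-\chi(P_j)$. The positive contributions arising from $-\chi(P_j)$ are precisely arcs of $\rev{P_j}$, each of which is in $G_j$ by the very definition of augmentation along $P_j$. Carefully checking this on every undirected edge---in particular the case where $Q$ and $P_j$ traverse the same edge in opposite directions---is the most delicate step.

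Granted the support claim, decompose $h$ within $G_j$ into a unit $v$-$s$-walk $R_1$, a unit $t$-$s$-walk $R_2$, and a collection of cycles. Since $f^{(j)}$ is a minimum cost flow of its value (a standard consequence of SSP), $G_j$ admits no negative cost cycles, so the cycles contribute nonnegative total cost and
\[ \tau(R_1) + \tau(R_2) \leq \tau(h) = \tau(Q) - \tau(P_j) = d_{j-1}(v,s) - d_{j-1}(s,t). \]

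To close the argument, I would invoke the standard SSP fact that $p := d_{j-1}(s, \cdot)$ provides a valid potential on $G_j$: arcs on $P_j$ have zero reduced cost in $G_{j-1}$ (since $P_j$ is a shortest path), so their reverses (which populate $G_j \setminus G_{j-1}$) also have zero reduced cost, while every other arc of $G_j$ is inherited unchanged from $G_{j-1}$. Summing reduced costs along $R_2$ then gives $\tau(R_2) \geq p(s) - p(t) = -d_{j-1}(s,t)$, and combining with the previous inequality yields $d_j(v,s) \leq \tau(R_1) \leq d_{j-1}(v,s)$, as desired.
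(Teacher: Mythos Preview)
Your argument is correct, and it takes a genuinely different route from the paper's. The paper proceeds by explicit path surgery: starting from a shortest $v$-$s$-path $Q$ in $G_{j-1}$, it identifies a shared arc of $Q$ and $P_j$, splices together a segment of $\rev{P_j}$ with a tail segment of $Q$ to produce a concrete $v$-$s$-path $Q' \subseteq G_j$, and compares $\tau(Q')$ with $\tau(Q)$ via the absence of negative cycles in $G_{j-1}$. Your approach instead bundles $Q$ and $\rev{P_j}$ into the signed flow $h = \chi(Q) - \chi(P_j)$, uses flow decomposition to extract the $v$-$s$ and $t$-$s$ components, and then disposes of the $t$-$s$ piece via the standard SSP potential $d_{j-1}(s,\cdot)$ on $G_j$. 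This trades a hands-on combinatorial construction for slightly heavier off-the-shelf machinery (decomposition plus reduced costs), but it sidesteps any case analysis on how $Q$ and $P_j$ interleave and is arguably more robust.

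One small point worth making explicit: when you sum reduced costs along $R_2$, you are tacitly assuming the potential $p = d_{j-1}(s,\cdot)$ is finite at every vertex of $R_2$. This does hold, since $R_2$ starts at $t \in P_j$ and every arc in the support of $h$ lies either in $\rev{P_j}$ (both endpoints on $P_j$, hence reachable from $s$ in $G_{j-1}$) or in $Q \subseteq G_{j-1}$ (so reachability from $s$ propagates along it); an induction along $R_2$ then gives finiteness. It is a minor technicality, but stating it would make the argument airtight.
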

\begin{proof}
We show that $d_{j-1}(v,s) \geq d_j(v,s)$ for any $j \in [m]$. 
Let $Q$ be a shortest $v$-$s$-path in $G_{j-1}$ 
(if there is no such path, then there is nothing to prove). 
If $P_j \cap Q = \emptyset$, then $Q \subseteq E_j$, so the claim holds. 
Otherwise, let $xy$ be the
last (i.e., closest to $s$) edge of $Q$ that is also in $P_j$, 
and let $Q_y$ denote the subpath of $Q$ from $v$ to $y$. 
Also let $P_y$ denote the subpath of $P_j$ from $y$ to $v$. 
Then $C := Q_y + P_y$ is a directed cycle 
(or collection of cycles, possibly with some edges included twice). 
Since $C \subseteq E_{j-1}$,
$\tau(C) \geq 0$, so $\tau(\rev{P}_y) \leq \tau(Q_y)$.

Define $Q'$ to be the path obtained by appending the subpath of $Q$
from $y$ to $s$ to $\rev{P}_y$. 
Then $Q' \subseteq E_j$, and $\tau(Q') \leq \tau(Q)$, as required.
\end{proof}

\begin{lemma}\label{lem:equalityDario}
    For all $j \in [m]$ and $v \in V$, $d_{j-1}(v,t)-d_{j-1}(s,t)=d_{j}(v,s)$.
\end{lemma}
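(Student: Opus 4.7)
The plan is to establish the identity $d_j(v,s) = d_{j-1}(v,t) - d_{j-1}(s,t)$ by proving matching upper and lower bounds, using a potential argument in one direction and an explicit flow construction in the other.

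\smallskip
\noindent\textbf{Lower bound via potentials.} I would first verify that $\mu(u) := d_{j-1}(u,t)$ is a valid potential on the residual graph $G_j$, in the sense that every arc has nonnegative reduced cost $\tau_e - (\mu(v) - \mu(w))$ for $e = vw$. For any forward arc $vw \in E_{j-1}$, this is just the triangle inequality $\mu(v) \leq \tau_{vw} + \mu(w)$. Since $P_j$ is a shortest $s$-$t$-path in $G_{j-1}$, each arc $vw \in P_j$ satisfies this with equality, so its reverse $wv$ (which is exactly the ``new'' arc appearing in $E_j \setminus E_{j-1}$) also has reduced cost zero. Since every arc of $E_j$ is either an unchanged arc of $E_{j-1}$ or a reverse arc of some edge of $P_j$, the potential $\mu$ yields nonnegative reduced costs throughout $G_j$. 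Telescoping along any $v$-$s$-path in $G_j$ then gives $d_j(v,s) \geq \mu(v) - \mu(s) = d_{j-1}(v,t) - d_{j-1}(s,t)$.

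\smallskip
\noindent\textbf{Upper bound via an explicit flow.} For the reverse inequality, let $Q$ be a shortest $v$-$t$-path in $G_{j-1}$, so $\tau(Q) = d_{j-1}(v,t)$, and consider $g := \chi(Q) - \chi(P_j) \in \R^{\bid{E}}$. As a net flow, $g$ sends one unit from $v$ to $s$ (the contributions at $t$ cancel), and its cost equals $\tau(Q) - \tau(P_j) = d_{j-1}(v,t) - d_{j-1}(s,t)$. The crucial verification is that the support of $g$ lies in $G_j$: an edge $e \in Q \cap P_j$ contributes $0$; an edge of $Q \setminus P_j$ lies in $E_{j-1} \setminus P_j \subseteq E_j$ (its capacity is unaffected by the augmentation); and an edge of $P_j \setminus Q$ contributes to $g$ via its reverse, which lies in $E_j$ since $x_j > 0$ was just pushed along $P_j$. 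Now $g$ can be decomposed into a unit $v$-$s$-path $R$ together with a (possibly empty) collection of cycles, all of which are in $G_j$ and hence of nonnegative cost by the standard SSP invariant. Therefore $d_j(v,s) \leq \tau(R) \leq \mathrm{cost}(g) = d_{j-1}(v,t) - d_{j-1}(s,t)$, completing the argument.

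\smallskip
\noindent\textbf{Degenerate case and main obstacle.} The only technical wrinkle I anticipate is the case $d_{j-1}(v,t) = \infty$, where the potential $\mu$ is not finite at $v$. This is handled directly: the vertex set $S := \{u : d_{j-1}(u,t) = \infty\}$ has no outgoing arcs in $G_{j-1}$, and none are created in $G_j$ either, since every vertex of $P_j$ lies outside $S$ (each reaches $t$ along $P_j$), so reverse arcs of $P_j$ stay inside $V \setminus S$. Hence $v \in S$ cannot reach $s \in V \setminus S$, giving $d_j(v,s) = \infty$ in agreement with the identity. I expect the main obstacle in writing up the proof to be the bookkeeping in verifying that $g = \chi(Q) - \chi(P_j)$ is supported in $G_j$ (handling the various cases of how $Q$ and $P_j$ overlap and the direction of each residual arc); the potential argument and the flow decomposition are then straightforward.
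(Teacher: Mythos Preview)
Your proof is correct and takes a somewhat different route from the paper's, though the underlying ideas are closely related.

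For the lower bound $d_j(v,s)\ge d_{j-1}(v,t)-d_{j-1}(s,t)$, the paper argues constructively: it takes a shortest $v$-$s$-path $Q$ in $G_j$, forms a $v$-$t$-walk contained in $P_j+Q$, checks that any arc of $Q$ not already in $E_{j-1}$ is the reverse of an arc of $P_j$ and hence cancels, so the walk lies in $G_{j-1}$, and then uses the absence of negative cycles to conclude $d_{j-1}(v,t)\le \tau(Q)+\tau(P_j)$. Your potential argument with $\mu(u)=d_{j-1}(u,t)$ is a cleaner packaging of exactly the same content---the fact that arcs of $P_j$ are tight is precisely what gives their reverses zero reduced cost---and it avoids having to build an explicit path.

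For the upper bound, the paper finds the first vertex $w$ where a shortest $v$-$t$-path $\bar Q$ in $G_{j-1}$ meets $P_j$, and splices the $v$-$w$ prefix of $\bar Q$ (which lies in $E_j$ since it avoids $P_j$) with the reversed $w$-$s$ portion of $P_j$. Your $g=\chi(Q)-\chi(P_j)$ plus flow decomposition is essentially the same construction expressed more abstractly; it trades the paper's explicit splice point for the support verification you flag as the main bookkeeping step. Both buy the same inequality with comparable effort.

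Your treatment of the degenerate case $d_{j-1}(v,t)=\infty$ is correct and worth including; the paper leaves this implicit.
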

\begin{proof}
To show that $d_{j-1}(v,t)-d_{j-1}(s,t)\leq d_{j}(v,s)$, 
let $Q$ be a shortest $v$-$s$-path in $G_j$, 
and let $Q'$ be a $v$-$t$-path contained in $P_j + Q$ (arcs in opposite directions are cancelled).
Then $Q'$ is in $G_{j-1}$; if $e$ is an arc in $Q$ not in $G_{j-1}$, 
then $e$ is the reverse of an arc of $P_j$, and hence not in $P_j + Q$. 
So $d_{j-1}(v,t) \leq \tau(Q')$.
But since $G_{j-1}$ has no negative cost cycles, $\tau(Q') \leq \tau(Q) + \tau(P_j)$. 

To show that $d_{j-1}(v,t)-d_{j-1}(s,t)\geq d_{j}(v,s)$, let $\bar{Q}$ be a shortest $v$-$t$-path in $G_{j-1}$.
Let $w$ be the first (i.e., closest to $v$) vertex present 
in both $\bar{Q}$ and $P_j$ (notice that $w$ might be equal to $v$ or $t$) and let $Q$ be the $v$-$w$-path contained in $\bar{Q}$.
Then 
\begin{equation}\label{eq:Qchoice}
    d_{j-1}(v,t)=d_{j-1}(w,t)+\tau(Q).
\end{equation}
Since $Q\subseteq E_{j}$, we have that: 
\begin{align*}
d_{j}(v,s)&\leq d_{j}(w,s)+\tau(Q)\\
 &=d_{j-1}(w,t)-d_{j-1}(s,t)+\tau(Q) &&\text{since $w \in P_j$}\\
 &=d_{j-1}(v,t)-d_{j-1}(s,t) &&\text{by \eqref{eq:Qchoice}}.
\end{align*}
This concludes the proof.
\end{proof}

Now we are ready to show that $\pi$ satisfies  conditions (\ref{prop:noninc}) and (\ref{prop:piedge}) of \Cref{thm:optimality}.
\begin{lemma}\label{lem:potential_plus_epsilon}
$\theta \to \pi_v(\theta) - \alpha \theta$ is nonincreasing.
\end{lemma}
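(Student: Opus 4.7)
The plan is to partition the real line into intervals on which $J(v,\cdot)$ is constant, verify monotonicity on each such interval, and then check that $\pi_v$ does not jump upward at the (countably many) transition points. Writing $g_j(\theta) := \max\{-\alpha d_j(v,s),\, C - \alpha d_j(v,t) - \rho(\theta + d_j(v,t)),\, 0\}$, so that $\pi_v(\theta) = g_{J(v,\theta)}(\theta)$, I would first verify that for each fixed $j$ the function $g_j(\theta) - \alpha\theta$ is nonincreasing in $\theta$. This is immediate for the terms $-\alpha d_j(v,s) - \alpha\theta$ and $-\alpha\theta$; the middle term rewrites, up to a $\theta$-independent constant, as $C - \bigl[\rho(\theta + d_j(v,t)) + \alpha(\theta + d_j(v,t))\bigr]$, which is nonincreasing by the growth bound on $\rho$. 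Hence on any maximal interval on which $J(v,\cdot)$ is constant, $\pi_v(\theta) - \alpha\theta$ is nonincreasing.

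The hard part is the transition points. The key identity I would establish is that at any $\theta_0$ with $c_{j+1}(v,\theta_0) = C$, one has $g_j(\theta_0) = g_{j+1}(\theta_0)$, so that $\pi_v$ is in fact continuous across such a transition. Using $c_{j+1}(v,\theta_0) = \alpha d_j(s,t) + \rho(\theta_0 + d_j(v,t)) = C$ to substitute for $\rho(\theta_0 + d_j(v,t))$ and then applying \Cref{lem:equalityDario}, the $\bar\pi$-term at index $j$ becomes
\[
  C - \alpha d_j(v,t) - \rho(\theta_0 + d_j(v,t)) \;=\; \alpha\bigl(d_j(s,t) - d_j(v,t)\bigr) \;=\; -\alpha d_{j+1}(v,s),
\]
which is exactly the $\pi'$-term at index $j+1$. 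This magic cancellation between the scheduling and commute contributions is the heart of the proof.

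To convert the identity into $g_j(\theta_0) = g_{j+1}(\theta_0)$, I would prove a small auxiliary sub-lemma: the quantity $C - \alpha d_j(v,t) - \rho(\theta + d_j(v,t))$ is nonincreasing in $j$ for each fixed $\theta$. This again reduces, via the growth bound on $\rho$, to the fact that $d_j(v,t)$ is nondecreasing in $j$ (the symmetric counterpart of \Cref{lem:vsdec}, proven the same way). Combined with $-\alpha d_j(v,s)$ being nondecreasing in $j$ (\Cref{lem:vsdec}), both $g_j(\theta_0)$ and $g_{j+1}(\theta_0)$ collapse to $\max\{-\alpha d_{j+1}(v,s),\, 0\}$, giving the claimed equality.

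To wrap up, I would handle multi-step simultaneous transitions of $J$ by chaining the single-step identity along the indices involved, and note that for lower semi-continuous (but not continuous) $\rho$ the equality $c_{j+1}(v,\theta_0) = C$ is replaced by an appropriate one-sided limit that leaves the algebra unchanged. Combining intra-interval monotonicity with continuity at transitions gives the global monotonicity of $\pi_v(\theta) - \alpha\theta$. The main obstacle is really identifying the right substitution in paragraph two and invoking \Cref{lem:equalityDario} correctly; everything else is bookkeeping on the two monotonicities of $d_j(v,s)$ and $d_j(v,t)$ in $j$.
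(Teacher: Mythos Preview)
Your approach is correct and genuinely different from the paper's. The paper does not decompose into intervals of constant $J$ and analyze transitions; instead it fixes arbitrary $\theta$ and $\epsilon \geq 0$, sets $j = J(v,\theta)$ and $\ell = J(v,\theta+\epsilon)$, and proves $\pi_v(\theta) \geq \pi_v(\theta+\epsilon) - \alpha\epsilon$ by a direct case analysis on which of $\pi'$, $\bar\pi$, or $0$ realises the maximum at $\theta+\epsilon$, combined with the dichotomy $\ell \leq j$ versus $\ell > j$ (or $\ell \geq j$ versus $\ell < j$). In each case it bounds $\pi_v(\theta)$ from below by either $\pi'_v(\theta)$ or $\bar\pi_v(\theta)$ and pushes through using the growth bound, \Cref{lem:vsdec}, and \Cref{lem:equalityDario}, never needing to identify a transition point or to assert continuity of $\pi_v$.

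Your route is more structural: you exploit that (under strong unimodality) the level sets $\{c_j(v,\cdot)\le C\}$ are nested intervals, so $J(v,\cdot)$ is piecewise constant with finitely many pieces, and the problem reduces to checking that $\pi_v$ does not jump upward at the boundaries. The identity $\bar\pi_v^{(j)}(\theta_0) = \pi_v'^{(j+1)}(\theta_0)$ you extract from $c_{j+1}(v,\theta_0)=C$ via \Cref{lem:equalityDario} is exactly the mechanism the paper uses implicitly inside its case analysis, so the algebraic content is the same. What your approach buys is a clearer picture of \emph{why} $\pi_v$ is continuous across index changes; what it costs is the extra bookkeeping for multi-step jumps and for lower semi-continuous $\rho$ (where the equality $c_{j+1}(v,\theta_0)=C$ may fail and you must argue via one-sided limits, as you note). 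The paper's global comparison sidesteps all of that: it never assumes anything about the structure of $J(v,\cdot)$, works for arbitrary $j,\ell$ in one shot, and is insensitive to whether $\rho$ is continuous. Your sub-lemma that $\bar\pi^{(j)}_v(\theta)$ is nonincreasing in $j$ is correct and is also used (without being stated separately) in the paper's Case~2.
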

\begin{proof}
    Fix any $\theta \in \R$ and $\epsilon \geq 0$.
    We show that $\pi_v(\theta) \geq \pi_v(\theta+\epsilon) - \alpha \epsilon$.
Let $j := J(v,\theta)$ and $\ell := J(v,\theta + \epsilon)$.

\begin{itemize}
\item
  \textbf{Case 1:} $\pi_v(\theta+\epsilon) =  - \alpha d_{\ell}(v,s)$.

  If $\ell \leq j$, then by \Cref{lem:vsdec}
  \[ \pi_v(\theta) 
    \geq \pi'_v(\theta) 
    =  - \alpha d_{j}(v,s) 	
    \geq  - \alpha d_{\ell}(v,s)
= \pi_v(\theta+\epsilon). \]
  So suppose $\ell > j$. By the definition of $J(v,\theta+\epsilon)$, we know that 
    \begin{equation}\label{eq:CgreaterStraight}
        \alpha d_{\ell-1}(s,t) + \rho(\theta+\epsilon + d_{\ell-1}(v,t))\leq C.
    \end{equation}
  As a consequence, we have that: 
  \begin{align*}
    \pi_v(\theta) &\geq \bar{\pi}_v(\theta) \\
     &= C-\alpha d_{j}(v,t) - \rho (\theta+d_{j}(v,t))\\
      &\overset{(*)}{\geq} C-\alpha d_{j}(v,t) - \rho (\theta+\epsilon+d_{\ell-1}(v,t)) - \alpha \left(\epsilon+d_{\ell-1}(v,t)-d_{j}(v,t)\right)\\
     &\geq \alpha d_{\ell-1}(s,t) - \alpha \epsilon - \alpha d_{\ell-1}(v,t) && \text{by (\ref{eq:CgreaterStraight})}\\
     &= - \alpha \epsilon -\alpha d_{\ell}(v,s) && \text{by \Cref{lem:equalityDario}} \\
     &= \pi_v(\theta+\epsilon)-\alpha \epsilon. 
  \end{align*}
  Inequality $(*)$ follows from the growth assumption on $\rho$ combined with the fact that $\theta+\epsilon+d_{\ell-1}(v,t)\geq \theta+d_{j}(v,t)$.
\item
  \textbf{Case 2:} 
  $\pi_v(\theta+\epsilon) = C-\alpha d_{\ell}(v,t) - \rho (\theta+\epsilon+d_{\ell}(v,t))$.

  If $\ell\geq j$, then:
  \begin{align*}
    \pi_v(\theta) &\geq \bar{\pi}_v(\theta) \\
     &= C-\alpha d_{j}(v,t) - \rho (\theta+\epsilon+d_{j}(v,t))\\
     &= C-\alpha d_{j}(v,t) - \rho (\theta+\epsilon+d_{\ell}(v,t)) + \rho (\theta+\epsilon+d_{\ell}(v,t))- \rho (\theta+d_{j}(v,t))\\
     &\geq C-\alpha d_{j}(v,t) - \rho (\theta+\epsilon+d_{\ell}(v,t)) - \alpha \left(\epsilon+d_{\ell}(v,t)-d_{j}(v,t)\right)\\
    &= C- \rho (\theta+\epsilon+d_{\ell}(v,t)) - \alpha\epsilon -\alpha d_{\ell}(v,t)\\
    &= \pi_v(\theta+\epsilon) - \alpha \epsilon.
  \end{align*}
    The second inequality follows again from the growth assumption, this time combined with the inequality $d_{\ell}(v,t)\geq d_{j}(v,t)$.

  If $\ell<j$, by definition of $J(v,\theta+\epsilon)$  we have that 
      \[
    \alpha d_{\ell}(s,t) + \rho(\theta+\epsilon+ d_{\ell}(v,t))>C.
    \] 
    From this, we obtain
  \begin{align*}
    \pi_v(\theta ) 
    &\geq \pi'_v(\theta) \\
    &= - \alpha d_{j}(v,s)	\\
    &> C-\alpha d_{\ell}(s,t) - \rho(\theta +\epsilon+ d_{\ell}(v,t)) - \alpha d_{j}(v,s) 	\\
    &\geq C-\alpha d_{\ell}(s,t) - \rho(\theta +\epsilon+ d_{\ell}(v,t)) - \alpha d_{\ell+1}(v,s) &&\text{by \Cref{lem:vsdec}}\\
    &= C-\alpha d_{\ell}(v,t) - \rho(\theta +\epsilon+ d_{\ell}(v,t))  &&\text{by \Cref{lem:equalityDario}}\\
    &= \pi_v(\theta).
  \end{align*} 
\end{itemize}
\end{proof}

\begin{lemma}\label{lem:potential_plus_tau}
    If $vw \in E^f(\theta)$, then
$\pi_w(\theta + \tau_{vw}) \leq \pi_v(\theta) + \alpha \tau_{vw}$.
\end{lemma}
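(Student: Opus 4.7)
The plan is to do a case analysis along two axes: (i) whether $vw\in E^f(\theta)$ arises as a forward residual arc ($vw\in E$ with $f_{vw}(\theta)<\nu_{vw}$) or as a backward residual arc ($wv\in E$ with $f_{wv}(\theta+\tau_{vw})>0$, using $\tau_{vw}=-\tau_{wv}$), and (ii) which of the three terms $\pi'_w(\theta+\tau_{vw})$, $\bar\pi_w(\theta+\tau_{vw})$, $0$ achieves the maximum defining $\pi_w(\theta+\tau_{vw})$. The axis (i) only matters insofar as it tells us in which static residual graph the arc $vw$ lives: invoking \Cref{thm:fvalue}, if $vw$ is a forward residual arc then $f^{(j)}_{vw}<\nu_{vw}$ for $j:=J(v,\theta)$ so $vw\in E_j$, whereas if $vw$ is a backward residual arc then $f^{(k)}_{wv}>0$ for $k:=J(w,\theta+\tau_{vw})$ so the backward arc $vw$ lies in $E_k$.

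In each of the three subcases of (ii) I would mimic the strategy used in \Cref{lem:potential_plus_epsilon}: start from the relevant lower bound on $\pi_v(\theta)$ (either $\pi'_v(\theta)=-\alpha d_j(v,s)$ or $\bar\pi_v(\theta)$ or $0$), use the residual triangle inequality in the appropriate $E_j$ or $E_k$ (giving $d_\cdot(v,s)\le\tau_{vw}+d_\cdot(w,s)$ or $d_\cdot(v,t)\le\tau_{vw}+d_\cdot(w,t)$), and then reconcile the stage indices $j$ and $k$ using Lemmas~\ref{lem:vsdec} and~\ref{lem:equalityDario}. For the $\bar\pi_w$ case, the growth bound on $\rho$ is additionally invoked to pass from $\rho(\theta+\tau_{vw}+d_k(w,t))$ back to $\rho(\theta+d_j(v,t))$, exactly as in the proof of \Cref{lem:potential_plus_epsilon}. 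For the $0$ case with $\tau_{vw}\ge 0$ the inequality is immediate from $\pi_v(\theta)\ge 0$; for $\tau_{vw}<0$ (so $vw$ is necessarily a backward residual arc) one uses that some SSP carries flow from $w$ through $v$ to witness $J(v,\theta)\ge 1$ and, via $\pi'_v$ together with $d_k(w,s)\le 0$ and the residual triangle inequality, obtains $\pi_v(\theta)\ge-\alpha\tau_{vw}$.

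The main obstacle is exactly this reconciliation of $j=J(v,\theta)$ with $k=J(w,\theta+\tau_{vw})$: the ``natural'' triangle inequality in $E_k$ bounds $d_k(v,s)$, but the easy lower bound on $\pi_v(\theta)$ is $-\alpha d_j(v,s)$, and $j$ and $k$ need not agree. When $j\ge k$, \Cref{lem:vsdec} gives $d_j(v,s)\le d_k(v,s)$ and the bound goes through directly; when $j<k$, I will instead switch to the other branch of the max defining $\pi_v(\theta)$, deriving from the strict inequality $J(v,\theta)<k$ that $\alpha d_{k-1}(s,t)+\rho(\theta+d_{k-1}(v,t))>C$ and using this (together with \Cref{lem:equalityDario} to swap $d_{k-1}(v,t)-d_{k-1}(s,t)$ for $d_k(v,s)$, and the growth bound to move the $\rho$ argument) to lower-bound $\bar\pi_v(\theta)$ by what is needed. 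The symmetric structure of this switching is what makes the case analysis close up, and the formulas are essentially forced by the identities already established in Lemmas~\ref{lem:vsdec}, \ref{lem:equalityDario}, and~\ref{lem:potential_plus_epsilon}; the bookkeeping is the only real work.
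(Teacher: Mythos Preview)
Your plan is correct and would close, but axis (i) is an unnecessary complication. The paper observes that \Cref{thm:fvalue} gives $vw\in E_j$ with $j=J(v,\theta)$ \emph{regardless} of whether $vw$ is a forward or backward residual arc. For a backward arc ($wv\in E$, $f_{wv}(\theta-\tau_{wv})>0$), the point is that the proof of \Cref{thm:fvalue} shows not only $f_{wv}(\xi)=f^{(J(w,\xi))}_{wv}$ but equally $f_{wv}(\xi)=f^{(J(v,\xi+\tau_{wv}))}_{wv}$ (this is the content of the parenthetical ``noting that $J(w,\theta+\tau_{vw})=J(v,\theta)$'' there, which really says that for every $j$ with $vw\in P_j$ or $wv\in P_j$ one has $c_j(w,\theta+\tau_{vw})=c_j(v,\theta)$). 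Plugging $\xi=\theta-\tau_{wv}$ yields $f^{(j)}_{wv}>0$, hence the backward arc $vw$ lies in $E_j$.

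Once you have $vw\in E_j$ uniformly, only axis (ii) remains, and the paper runs exactly the two cases you describe (the max is attained by $\pi'_w$ or by $\bar\pi_w$), each split by $\ell\le j$ versus $\ell>j$ (respectively $\ell\ge j$ versus $\ell<j$), using \Cref{lem:vsdec}, \Cref{lem:equalityDario}, the triangle inequality in $E_j$, and the growth bound---precisely as you outline. Your separate ``$0$ case'' is also fine but is in fact subsumed: for a backward arc, $f^{(\ell)}_{wv}>0$ forces $wv\in P_i$ for some $i\le\ell$, whence $d_i(w,s)=-d_{i-1}(s,w)\le 0$ and so $d_\ell(w,s)\le 0$ by \Cref{lem:vsdec}; thus $\pi'_w(\theta+\tau_{vw})\ge 0$ and the $0$ branch coincides with the $\pi'_w$ branch. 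So your argument and the paper's are the same at heart; the paper just buys a cleaner case structure by extracting $vw\in E_j$ once at the start.
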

\begin{proof}
Let $j := J(v,\theta)$ and $\ell := J(w,\theta + \tau_{vw})$.
Note that since $vw \in E^f(\theta)$, \Cref{thm:fvalue} implies that $vw \in E_j$.
\begin{itemize}
\item
  \textbf{Case 1:} $\pi_w(\theta+\tau_{vw}) = - \alpha d_{\ell}(w,s)$.
  
  If $\ell \leq j$, then 
    \begin{align*}
    \pi_v(\theta ) 
    &\geq  - \alpha d_j(v,s) 	\\
    &\geq  - \alpha \tau_{vw} - \alpha d_{j}(w,s) && \text{since } vw \in E_j\\
    &\geq  - \alpha \tau_{vw} - \alpha d_{\ell}(w,s) &&\text{by \Cref{lem:vsdec}}\\
    &= \pi_w(\theta+ \tau_{vw}) - \alpha \tau_{vw}.
  \end{align*} 
  
    So suppose $\ell >j$. 
    By the definition of $J(w,\theta + \tau_{vw})$ we know that 
    \begin{equation}\label{eq:Cgreater2Straight}
    \alpha d_{\ell-1}(s,t) + \rho(\theta +\tau_{vw}+ d_{\ell-1}(w,t))\leq C.
    \end{equation}
    Since $vw \in E_j$ and $d_{j}(w,t)\leq d_{\ell-1}(w,t)$, we also have
    \begin{equation}\label{eq:intPosStraight}
    \theta+d_{j}(v,t)\leq \theta+\tau_{vw}+d_{\ell-1}(w,t).
    \end{equation}
    Thus
  \begin{align*}
    \pi_v(\theta) 
                  &\geq C-\alpha d_{j}(v,t) - \rho (\theta+d_{j}(v,t)) \\ 
     &\geq C-\alpha d_{j}(v,t) - \rho (\theta+\tau_{vw}+d_{\ell-1}(w,t)) - \alpha \left(\tau_{vw}+d_{\ell-1}(w,t)-d_{j}(v,t)\right)\\
     &\geq \alpha d_{\ell-1}(s,t) - \alpha \tau_{vw} - \alpha d_{\ell-1}(w,t) && \text{by (\ref{eq:Cgreater2Straight})}\\
     &= - \alpha \tau_{vw} -\alpha d_{\ell}(w,s) && \text{by \Cref{lem:equalityDario}} \\
     &= \pi_w(\theta+\tau_{vw})-\alpha \tau_{vw} 
  \end{align*}
   where the second inequality follows from the growth assumption and from (\ref{eq:intPosStraight}).

\item
  \textbf{Case 2:} 
  $\pi_w(\theta+\tau_{vw}) = C-\alpha d_{\ell}(w,t) - \rho (\theta+\tau_{vw}+d_{\ell}(w,t))$.

  If $\ell\geq j$, since $vw \in E_j$ and $d_{j}(w,t)\leq d_{\ell}(w,t)$, we have that
    \begin{equation}\label{eq:intPosStraight2}
    \theta+d_{j}(v,t)\leq \theta+\tau_{vw}+d_{\ell}(w,t).
    \end{equation}
    As a consequence, exploiting also the growth assumption, we have
  \begin{align*}
    \pi_v(\theta) &\geq \bar{\pi}_v(\theta) \\
     &= C-\alpha d_{j}(v,t) - \rho (\theta+d_{j}(v,t))\\
     &= C-\alpha d_{j}(v,t) - \rho (\theta+\tau_{vw}+d_{\ell}(w,t)) + \rho (\theta+\tau_{vw}+d_{\ell}(w,t))- \rho (\theta+d_{j}(v,t))\\
     &\geq C-\alpha d_{j}(v,t) - \rho (\theta+\tau_{vw}+d_{\ell}(w,t)) - \alpha \left(\tau_{vw}+d_{\ell}(w,t)-d_{j}(v,t)\right)\\
     &= C- \rho (\theta+\tau_{vw}+d_{\ell}(w,t)) - \alpha\tau_{vw} -\alpha d_{\ell}(w,t)\\
     &= \pi_w(\theta+\tau_{vw})-\alpha \tau_{vw}.
  \end{align*}

  If $\ell<j$, by definition of $J(w,\theta+\tau_{vw})$  we have that 
    \begin{equation}\label{eq:intPosStraight3}
    \alpha d_{\ell}(s,t) + \rho(\theta+\tau_{vw}+ d_{\ell}(w,t))>C.
    \end{equation}
    Thus
  \begin{align*}
    \pi_v(\theta ) 
    &\geq - \alpha d_{j}(v,s)\\
    &\geq  - \alpha d_{j}(w,s)-\alpha\tau_{vw} 	 && \text{since } vw \in E_j\\
    &\geq  - \alpha d_{\ell+1}(w,s)-\alpha\tau_{vw} 	 && \text{by \Cref{lem:vsdec}}\\
    &> C-\alpha d_{\ell}(s,t) - \rho(\theta +\tau_{vw}+ d_{\ell}(w,t)) - \alpha d_{\ell+1}(w,s)-\alpha\tau_{vw}	 &&\text{by  (\ref{eq:intPosStraight3})}	\\
    &= C-\alpha d_{\ell}(w,t) - \rho(\theta +\tau_{vw}+ d_{\ell}(w,t))-\alpha\tau_{vw}  &&\text{by \Cref{lem:equalityDario}}\\
    &= \pi_w(\theta+\tau_{vw})-\alpha\tau_{vw}.
  \end{align*} 
\end{itemize}
\end{proof}
}

\section{Optimal tolls}\label{sec:tolls}

Tolls $\delta: E \times \R \to \Rplus$ are per-arc, time-varying and nonnegative.
The value $\delta_e(\xi)$ represents the toll a user is charged upon entering the link at time $\xi$.

We have the following theorem.

\begin{theorem}
    Let $(f,\pi)$ be an optimal primal-dual solution to \eqref{eq:primal} (as constructed in \Cref{sec:algorithm} and \Cref{sec:optimality}) and define, for each $vw \in E$,
\[ 
    \delta_{vw}(\theta) = (\pi_w(\theta + \tau_{vw}) - \pi_v(\theta) - \alpha \tau_{vw})^+.
\]
Then $f$ is a dynamic equilibrium under tolls $\delta$.
\end{theorem}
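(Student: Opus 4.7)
The plan is to interpret $\pi_v(\theta)$ as a potential capturing, up to additive constants, the minimum cost under the proposed tolls of being at $v$ at time $\theta$ along a trajectory originating from $s$. The very definition $\delta_{vw}(\theta) = (\pi_w(\theta+\tau_{vw})-\pi_v(\theta)-\alpha\tau_{vw})^+$ ensures that for every arc $vw$ and every time $\theta$,
\[
\alpha\tau_{vw} + \delta_{vw}(\theta) \;\geq\; \pi_w(\theta+\tau_{vw}) - \pi_v(\theta).
\]
Summing this telescopically along any simple $s$-$t$ path $P=(v_0=s,v_1,\ldots,v_k=t)$ taken with departure time $\theta_0$, so that arrival occurs at $\theta_k=\theta_0+\tau(P)$, and using $\pi_s\equiv 0$ together with property~(\ref{prop:pit}), I obtain
\[
\alpha\tau(P) + \sum_{i=1}^{k}\delta_{v_{i-1}v_i}(\theta_{i-1}) \;\geq\; \pi_t(\theta_k) \;=\; (C-\rho(\theta_k))^+.
\]
Adding the scheduling cost $\rho(\theta_k)$ yields that the total disutility of any such route is at least $\max\{C,\rho(\theta_k)\}\geq C$. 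This establishes $C$ as a universal lower bound on the cost any user can achieve under the tolls $\delta$, irrespective of path and departure time.

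It remains to verify that every flow particle of $f$ attains exactly $C$. The key observation is that whenever $f_{vw}(\theta)>0$, the reverse arc $wv$ lies in $E^f(\theta+\tau_{vw})$ by the definition of the residual network, and applying property~(\ref{prop:piedge}) of \Cref{thm:optimality} to $wv$ (with delay $-\tau_{vw}$) gives $\pi_v(\theta)\leq \pi_w(\theta+\tau_{vw})-\alpha\tau_{vw}$. Combined with the $(\cdot)^+$ in the definition of $\delta$, this forces
\[
\delta_{vw}(\theta) \;=\; \pi_w(\theta+\tau_{vw})-\pi_v(\theta)-\alpha\tau_{vw} \qquad\text{whenever } f_{vw}(\theta)>0,
\]
so the above telescoping inequality is actually an equality along every used trajectory. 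Moreover, property~(\ref{prop:pit}) ensures $\nabla f_t(\theta_k)=0$ for all $\theta_k$ with $\rho(\theta_k)>C$, so every used arrival time satisfies $(C-\rho(\theta_k))^+ = C-\rho(\theta_k)$. Putting these together, the total disutility along any used (path, departure-time) pair equals exactly $C$, matching the universal lower bound — which is precisely the dynamic equilibrium condition.

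The main conceptual obstacle is bridging the arc-wise optimality statement and the ``per-user'' equilibrium statement, since $f$ is a flow over time rather than an explicit collection of path-plus-departure pairs. I would address this either by appealing to a standard path decomposition of $f$ and noting that the arc-wise equalities above make every constituent path cost exactly $C$, or simply by stating the equilibrium condition arc-wise (infinitesimal users locally choose arcs that preserve equality in the potential inequality). A small additional point, which I would handle in passing, is that no user gains by waiting at an intermediate node: property~(\ref{prop:noninc}) says $\theta\mapsto\pi_v(\theta)-\alpha\theta$ is nonincreasing, so $\pi_v(\theta+\epsilon)\leq\pi_v(\theta)+\alpha\epsilon$, meaning waiting for time $\epsilon$ at $v$ costs at least $\alpha\epsilon$ in the potential while contributing exactly $\alpha\epsilon$ in commute time to a later continuation, leaving the total disutility unchanged at best.
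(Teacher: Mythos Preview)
Your proposal is correct and follows essentially the same approach as the paper: telescope the inequality $\alpha\tau_{vw}+\delta_{vw}(\theta)\geq\pi_w(\theta+\tau_{vw})-\pi_v(\theta)$ along any $s$-$t$ path using $\pi_s\equiv 0$ and $\pi_t(\theta)=(C-\rho(\theta))^+$ to get the lower bound $C$, and then use property~(\ref{prop:piedge}) on the reverse arc together with property~(\ref{prop:pit}) to show equality along every used trajectory of $f$. The only cosmetic difference is that the paper folds waiting directly into the telescoping step (allowing $\theta_w\geq\theta_v+\tau_{vw}$ and invoking property~(\ref{prop:noninc}) there), whereas you handle it as a separate remark at the end; the content is the same.
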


Of course, to make sense of this theorem we must know what is meant by a dynamic equilibrium under tolls.
A precise definition requires introducing the full game-theoretic fluid queueing model (also known as the Vickrey bottleneck model)~\cite{Vickrey69,Koch2011}.
Tolls and departure time choice can be introduced into the definition of a dynamic equilibrium discussed in these works.
Rather than going this route, we will show that the tolls satisfy a strong property that very clearly ensures the equilibrium property.

\ifipco{We show (in the full version---it is straightforward) that the following holds.}{We show that the following holds.}
A user starting from some $v \in V$ at some time $\theta \in \R$ cannot incur a total cost (including scheduling cost, and tolls and commuting cost from this point forward) less than $C - \pi_v(\theta)$. This is even allowing the user to take any link at any time, as if no other users were present in the network.
Since the flow represents a solution where all users incur a total cost of precisely $C$, this must certainly be an equilibrium.

\ipcoskip{
To see this, consider any $s$-$t$-path $P$ in $E$ and arrival times $\theta_v$ for each $v \in P$ valid for this path; 
so $\theta_w \geq \theta_v + \tau_{vw}$ for every $vw \in P$.
Thus by properties (\ref{prop:noninc}) and (\ref{prop:piedge}) of \Cref{thm:optimality}, $\pi_w(\theta_w) - \pi_v(\theta_v) \leq \alpha(\theta_w - \theta_v)$, with equality if $wv \in E^f(\theta_w)$ and $\theta_w = \theta_v + \tau_{vw}$.
Then the cost of a user using this route is
\begin{align*}
    &\phantom{=} \rho(\theta_t) + \sum_{e=vw \in P} [\alpha \tau_e + \delta_e(\theta_v)]\\
    &\geq \rho(\theta_t) + \pi_t(\theta_t) - \pi_s(\theta_s)\\
    &= \rho(\theta_t) + (C - \rho(\theta_t))^+\\
    &\geq C.
\end{align*}
The inequalities are all tight if for all $vw \in P$, $\theta_w = \theta_v + \tau_{vw}$ and $f_{vw}(\theta_v) > 0$, by the previous observations as well as property (\ref{prop:pit}). 
So if the aggregate choices of the users are described by $f$, all users pay exactly $C$.
}

\medskip

As already discussed, we cannot in general strongly enforce an optimal flow.
The following shows that the ``lane tolling'' approach suffices to do this.
\begin{theorem}
    With $f, \pi$ and $\delta$ as in the previous theorem, any dynamic equilibrium $g$ satisfying $g_e(\theta) \leq f_e(\theta)$ for all $e \in E$, $\theta \in \R$ is $g=f$.
\end{theorem}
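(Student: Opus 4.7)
My plan is to pass to integrated static flows and conclude with a circulation argument. I would set $F_e := \int f_e(\theta)\, d\theta$ and $G_e := \int g_e(\theta)\, d\theta$ for each arc $e \in E$. Since $f$ and $g$ are flows over time of value $Q$ with compact support, integrating the flow-conservation identities at each vertex shows that $F$ and $G$ are ordinary static $s$-$t$-flows of value $Q$. The pointwise inequality $g_e(\theta) \leq f_e(\theta)$ then integrates to $G_e \leq F_e$, so $H := F - G$ is a nonnegative static circulation: $H \geq 0$ arc-wise with zero net flow at every vertex (including $s$ and $t$).

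\textbf{Main step.} Next I would decompose $H$ as $\sum_k \beta_k \chi(C_k)$ over directed cycles $C_k \subseteq \mathrm{supp}(H) \subseteq \mathrm{supp}(F)$ with $\beta_k \geq 0$ (the standard decomposition of nonnegative circulations). If $\mathrm{supp}(F)$ contains no directed cycle, then $H \equiv 0$, which gives $\int (f_e - g_e)\, d\theta = 0$ for every arc $e$; combined with the pointwise inequality $f_e \geq g_e$, this forces $f_e = g_e$ almost everywhere, i.e., $g = f$. To argue that $\mathrm{supp}(F)$ is acyclic, I would appeal to the SSP construction: $F$ is built from simple paths in successive residual graphs, and any directed cycle of strictly positive total delay in $\mathrm{supp}(F)$ could be cancelled to strictly decrease the transit cost $\tau \cdot F$ while preserving the flow value---contradicting the minimum-cost optimality maintained at every SSP step.

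\textbf{Main obstacle.} The most delicate point is directed cycles of \emph{zero} total delay in $\mathrm{supp}(F)$, which the above cost argument cannot exclude. One remedy is a post-processing cycle-cancellation step in the SSP output that removes such cycles without affecting the resulting flow over time. Alternatively, I would leverage the equilibrium property of $g$: each user pays cost exactly $C$ in equilibrium (the dual lower bound $C - \pi_s = C$ from \Cref{thm:optimality} is matched by the upper bound obtained by imitating an $f$-user's path and departure time, which is feasible precisely because $g \leq f$), so no user would detour around a directed cycle (strictly increasing travel plus toll cost without any scheduling benefit, contradicting optimality at cost $C$). Hence $\mathrm{supp}(G)$ is cycle-free, and together with $H \leq F$ a careful reconciliation of the two supports forces $H = 0$.
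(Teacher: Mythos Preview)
The paper does not actually supply a proof of this theorem; it is stated and followed only by an informal sentence about lane tolling. So there is no ``paper's proof'' to compare against in detail. That said, your proposal has a genuine gap that is worth naming.

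Your main step tries to conclude $g=f$ from only two hypotheses: that $g$ is a flow over time of value $Q$, and that $g_e(\theta)\le f_e(\theta)$ pointwise. You do not use the hypothesis that $g$ is a dynamic equilibrium under the tolls $\delta$. But the purely combinatorial statement ``any value-$Q$ flow over time $g$ with $g\le f$ equals $f$'' is false in general. Take an instance in which two successive shortest paths share a zero-delay cycle in their union (this is easy to build: have $P_1=s\,v\,w\,u\,t$ and $P_2=s\,w\,u\,v\,t$ with $\tau_{vw}=\tau_{wu}=\tau_{uv}=0$). During the time interval where both paths are active, all three arcs $vw,wu,uv$ carry positive $f$-flow; subtracting a small $\epsilon$ along this cycle on a subinterval yields a valid flow over time $g\ne f$ of value $Q$ with $g\le f$. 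Thus your integrated-circulation argument cannot possibly succeed without bringing in the equilibrium property in an essential way.

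Relatedly, the justification you give for ``$\mathrm{supp}(F)$ has no positive-delay cycle'' is not correct as written. The integrated flow $F=\int f(\theta)\,d\theta$ is a nonnegative combination of the SSP flows $f^{(j)}$, not a single min-cost flow; ``cancelling a cycle in $F$ would contradict SSP optimality'' does not refer to any optimality that $F$ itself enjoys. Each $\mathrm{supp}(f^{(j)})$ is indeed free of positive-delay cycles (since $f^{(j)}$ is min-cost), but the union $\bigcup_j \mathrm{supp}(f^{(j)})$ need not be, and that is where $\mathrm{supp}(F)$ lives. Your two proposed remedies do not close the gap either: post-processing $f$ to remove zero-delay cycles changes the specific $f$ (and hence the $\pi,\delta$) the theorem is about; and the alternative sketch (``$\mathrm{supp}(G)$ is cycle-free, reconcile the supports'') is where all the actual work would have to happen, but it is left as a one-line gesture. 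A correct proof will have to exploit that every $g$-user attains cost exactly $C$ and then trace through the complementary-slackness equalities in the proof of \Cref{thm:optimality} to force $g_e(\theta)=f_e(\theta)$ arc-by-arc, not via an aggregated static circulation.
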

Essentially, being able to dynamically split and separately toll the capacity of a link allows us to easily rule out all other potential equilibria just by using tolls to artificially constrict the capacities (in addition to choosing tolls that weakly enforce the desired flow, which is still needed).
Tolling in this way seems quite distant from what could be imaginable in realistic traffic scenarios. 
But it does raise the interesting question of whether there is a tolling scheme which can strongly enforce an optimum flow,
but which is more restricted (and more plausible) than fully dynamic lane tolling.
Another natural question would be to determine if an optimum flow can be strongly enforced using lane tolling only on certain specified edges.
We leave these as open questions.

\section{General scheduling costs}\label{sec:general_scheduling}
We now briefly discuss general scheduling costs, satisfying only the growth assumption as well as minor technical conditions.
We will not give full details, but just highlight what changes need to be made in the algorithm and analysis.

First, suppose that $\rho$ was unimodal, but not strongly unimodal.
This introduces the complication that we cannot uniquely invert $\rho$ on $[0, \infty)$ and $(-\infty, 0]$; and also that $\rho^{-1}$ is discontinuous.
If we use the prescription of the interval $[a_j, b_j]$ given in \Cref{sec:algorithm}, namely the maximal interval for which
\begin{equation}\label{eq:interval_def}
\rho(\xi + d_{j-1}(s,t)) \leq C - \alpha d_{j-1}(s,t) \quad \text{for all} \quad \xi \in [a_j, b_j],
\end{equation}
we run into the difficulty that $Q$ is no longer a continuous function of $C$.
Thus we may find a choice $C_0$ so that the resulting amount of flow is strictly less than $Q$, but the amount of flow corresponding to $C_0 + \epsilon$ is strictly larger than $Q$, for any positive $\epsilon$.

To remedy this, we can proceed as follows after determining $C_0$.
Let $[a_j^0, b_j^0]$ be the minimal interval so that
\[ \rho(\xi + d_{j-1}(s,t)) \geq C_0 - \alpha d_{j-1}(s,t) \quad \text{for all} \quad \xi \in (-\infty, a_j^0] \cup [b_j^0, \infty). \]
Define $[a_j^1, b_j^1]$ as per the previous definition \eqref{eq:interval_def}.
Then for $\delta \in (0,1)$, define $a_j^\delta = (1-\delta)a_j^0 + \delta a_j^1$, and similarly for $b_j^\delta$.
It is not hard to see that for any choice of $\delta$, using the intervals $[a_j^\delta, b_j^\delta]$ in the original algorithm provides a flow over time with cost horizon $C_0$; and moreover that the value of the flow depends continuously on $\delta$.
The proof of optimality is essentially unaffected.
So a further round of bisection or parametric search suffices here.

\medskip

Next, suppose we drop the requirement that $\rho$ is unimodal (maintaining still the growth assumption).
Instead, let us make the mild assumption that for any $z \in \Rplus$, $\rho^{-1}([0,z])$ is a union of finitely many compact intervals.
Little changes, except that instead of obtaining a single interval $[a_j, b_j]$ for each path $P_j$, we obtain a finite number of intervals.
\Cref{thm:fvalue} remains true as stated.

\bibliographystyle{abbrv}
\bibliography{biblio}

\end{document}